\newtheorem{thm}{Theorem}[section]
\newtheorem{prop}[thm]{Proposition}
\newtheorem{rem}[thm]{Remark}
\def\Tr{{\rm T}}
\def\T{{\mathfrak C}_1}
\def\BL{{\mathfrak B}}
\def\D{{\mathcal D}}
\def\F{{\mathscr F}}
\def\G{{\mathscr G}}
\def\H{{\mathscr H}}
\def\K{{\mathcal K}}
\def\L{{\mathcal L}}
\def\R{\mathbb R}
\def\C{\mathbb C}
\def\Z{\mathbb Z}
\def\N{\mathbb N}
\begin{document}
\phantom{e}

\vspace{-10mm}

\noindent{\bf\Large On  the open  Dicke-type  model 
generated by an infinite-component vector spin} 

\setcounter{footnote}{0}
\renewcommand{\thefootnote}{\arabic{footnote}}

\vskip8mm

\noindent\textbf{\large Ryota Kyokawa${}^{\dagger}$, } 
\textbf{\large Hajime Moriya${}^{\dagger}$}   \ \ 
and \ \ 
\textbf{\large Hiroshi Tamura${}^{\dagger}$} 
\footnote{present address: Komatsu University, Shichoumachi nu 1-3, Komatsu 923-8511, Japan}
\medskip

\noindent{${}^{\dagger}$Graduate School of the Natural Science and  Technology, Kanazawa University, Kakuma-machi Kanazawa 920-1192, Japan

\vspace{1cm}

\vspace{1cm}

\noindent{\bf Abstract}\\
%%%%%%%%%%%%%%%%%%%%%%%%%%%%%%%%%%%%%%%%%%%%%%%%%%%%%%%%%%%%%%%%%%
We consider an  open Dicke model comprising a single infinite-component 
vector spin and  a  single-mode  harmonic oscillator 
which are 
connected by  Jaynes--Cummings-type interaction between them.   
This open quantum model is referred to as the OISD 
 (Open Infinite-component Spin Dicke) model.  
The  algebraic structure of the OISD  Liouvillian is studied in terms of 
superoperators  acting on the space of  density matrices.
An  explicit invertible  superoperator 
(precisely, a completely positive trace-preserving map) is obtained  
that transforms the OISD  Liouvillian  into  a sum of two independent 
Liouvillians, one generated   by a  dressed  spin only,  
the other generated by  a  dressed harmonic oscillator only.  
The  time evolution 
 generated by the OISD  Liouvillian
 is shown to be asymptotically equivalent 
to that generated by 
  an adjusted   decoupled Liouvillian   with 
some  synchronized frequencies of  the spin and the harmonic oscillator. 
This asymptotic equivalence implies  that 
 the time evolution  of the OISD model  dissipates completely 
in the presence of any (tiny)  dissipation.

\bigskip

\noindent{\it Keywords\/}: open quantum dynamics, 
Liouvillian, decoupling, open Dicke model,  
infinite-component vector spin, synchronization
%%%%%%%%%%%%%%%%%%%%%%%%%%%%%%%%%%%%%%%%%%%%%%%%%%%%%%%%%%
\section{Introduction} 
\label{sec:INT}

%%%%%%%%%%%%%%%%%%%%%%%%%%%%%%%%%%%%%%%%%%%%%%%%%%%%%%%%%%
Investigating open quantum dynamical systems  described by 
the Gorini--Kossakowski--Sudarshan--Lindblad--Davies-type master equation 
\cite{K, GKS, LINDBLAD, D2} 
 is fundamentally important for researching quantum optics \cite{BP}. 
In general, it is difficult to understand the dynamical properties of
open quantum models, even qualitatively. However, some known  models have been studied to date. 
The simplest  but nevertheless important  
 examples are open harmonic oscillator models   
whose  dynamics can be analyzed   by various  methods; for example, see
\cite{AJP, CS, NVZ}.  
 Note that the (original) Dicke model \cite{DICKE54}  pertains to
non-trivial  interaction between light and matter 
 and has been subjected to intensive research;  
see  \cite{Garraway} and the references therein.
Needless to say,   open-type  Dicke models  as in 
 \cite{BSH}   give   another   prototype of  open quantum models, 
 thereby stimulating both theoretical and numerical investigations; 
for example, see \cite{KLINDER} and the  references therein.  
However, 
  open Dicke models are  difficult to analyze in general. 
The present paper gives  a concrete open Dicke model  to which
a rigorous approach  is possible.

In order  to explain  our motivation 
 in  a broad context, we pose the following question. 
 To understand  a complicated quantum model, 
 what should we do first? 
Not in every case, but sometimes it is useful to  transform   it 
 into a simpler model. 
 As a special  realization of this simple  idea, 
 we recall Bogoljubov's  method, which 
 has been used widely in condensed matter physics and quantum field theory 
\cite{BOG}.
Bogoljubov's   method  is  tightly  
  related to the   concept of quasiparticles,  
and it is   essential  for  many important physics models such as 
Bardeen--Cooper--Schrieffer models and polaron models  \cite{FW}. 
However, although Bogoljubov's method 
has  many  applications as mentioned above,  it seems  that its 
roles   in  open quantum systems are yet to be explored fully: 
Let us mention a recent work \cite{PROSEN} that treats 
  the dynamics of open  fermion models.   
On that basis, we seek concrete   open quantum models 
to which  a Bogoljubov-like  method   can apply.   
In fact,  our  open  Dicke  model   
(mentioned above and specified  later) 
is    such  an  example,  
 although it is  somewhat artificial  as a physics model.     
   
 In our  Bogoljubov-like  method  
applied to open dynamics, we make use of 
 a completely positive and trace preserving (CPTP) map 
for its similarity transformation.  
Although  our argument is limited to this  particular model only, 
we  advocate a
 general  scheme for dealing with   open quantum dynamics as follows: 
 To  simplify (e.g., decouple,  diagonalize) 
Lindblad operators of open quantum systems, 
we  make use of  CPTP maps
for  similarity transformations rather than  
 unitary maps  used  in closed quantum systems.

We specify  our open-type  Dicke model.
First, its radiation field is given by   a  single-mode  
harmonic oscillator as in  the original Dicke model \cite{DICKE54}.  
Second,   its  matter is given by 
 a single infinite-component vector spin. 
Third, we  add  dissipation  in  the following way: 
the harmonic oscillator is implicitly connected to a thermal bath, 
whereas the matter  may or may not be connected to a thermal bath.    
Finally,  assuming   a Jaynes--Cummings-type interaction between   
the  harmonic oscillator and the spin,  we 
obtain our   open infinite-spin Dicke model.  
Hereinafter for abbreviation, we refer to this model as the OISD model.     

We provide  some   background to the OISD model.   
First, we  note   that   infinite-component vector spins 
can be extracted from $n$-compound  quantum  spin 
systems by a certain limiting procedure $n \to \infty$.   
The infinite-component spin Dicke  model, ISD for short, is   
a  {\it{conserved (non-dissipative)}}  quantum model 
that is  generated by  an infinite-component 
vector spin and a single-mode harmonic oscillator as described above.
It has been  investigated in  \cite{HL, D1, BZT}.   

Notably, the OISD model has   both   
dissipation (induced by a hidden thermal bath) and  
non-trivial interaction (between  the matter  and the radiation).       
We investigate their interplay and effects upon  the open  
quantum dynamics.
Through the  decoupling transformation,  we  delete  the interaction  
  producing   new dissipation terms in  both the spin and the harmonic 
oscillator.  Under this transformation, 
 no dissipation term  in the form of a mixture of the spin and 
the harmonic oscillator appears.

This paper is divided into several  parts as follows. 
In \S\ref{sec:HO}, we  discuss the algebraic  properties of 
the  open harmonic oscillator model in  a  self-contained manner.  
Although  some of our results 
in that section are either obvious or  
easily derived from known results  \cite{CS, BE, EFS, HNY, TP},  
our  reformulation  based on the algebra of superoperators  
is  useful for gaining information about the OISD model as well.   
In  \S\ref{sec:SPIN}, we introduce the infinite-component spin system. 
We show that dissipation (even a tiny amount) added to the infinite-component spin  makes the system thoroughly  unstable.      
The  special  algebraic structure   of the 
infinite-component  spin renders the OISD model tractable by purely 
mathematical methods (without resorting to numerical computations).    
In \S\ref{sec:OISD}, we define the OISD model  
and investigate its Liouvillian based on  \S\ref{sec:HO} and \S\ref{sec:SPIN}.  
In Proposition~\ref{prop:decom}, we show that the OISD  model is decomposed into a sum of  
two  {\it{independent}}   Liouvillians by a similarity transformation 
generated  by  a CPTP  superoperator (and its inverse).      
Here, independence means that  one is   written  by a  dressed  spin only,  
while   the other is written   by  a dressed harmonic oscillator only.     
As a result of the decomposition, the dynamics of 
the OISD model are examined. Proposition~\ref{prop:AYSMP}
 shows  that 
 the time evolution of the OISD model is 
asymptotically equivalent to that  of 
an   decoupled Liouvillian.
This   decoupled Liouvillian is adjusted so that it has   
 synchronized frequencies of  the spin and the harmonic oscillator.  
See \cite{GCZ}  for 
general information about synchronization for open quantum dynamics.         
Note that the decoupled Liouvillian in 
Proposition~\ref{prop:decom} and the adjusted decoupled Liouvillian 
 in Proposition~\ref{prop:AYSMP} differ.
Furthermore, in Proposition~\ref{prop:AYSMP} it is shown that in the presence of any (tiny) dissipation, the time evolution  of the OISD model dissipates completely as time tends  to infinity.

We now   introduce the mathematical notation 
that is used throughout this paper.    
Let $\H$ be an infinite-dimensional  Hilbert space.   
Let $\BL(\H)$ be the set of all bounded linear operators on $\H$   
and $\T(\H)$ be the Banach algebra of the trace class operators on $\H$ 
with trace norm $\| \, \cdot \,\|_1$.   
For  linear operators $A$, $B$, and $C$ acting in  
$\H$, we define superoperators $\K_A$ and 
$\D_{B\circ C}$ acting in $\T(\H)$ by   
\begin{equation}
\K_A(\rho): = [A, \rho], \quad
   \D_{B\circ C}(\rho): = 2B\rho C - \{CB, \rho \} 
   \quad \mbox{ for } \quad \rho \in \T(\H).  
\label{KDdef}
\end{equation}
We can verify   the following  relations by 
  straightforward calculations:   
\begin{align}
  [\K_A, \K_B] & =  \K_{[A, B]},  \quad 
   [\K_A, \D_{B \circ C}] = \D_{[A, B] \circ C}
   + \D_{B \circ [A, C]} ,
\label{KK}\\
  [\D_{A \circ B}, \D_{ C\circ D}] & =  
 \D_{[A,C] \circ\{B,D\}} - \D_{\{A, C\} \circ [B, D]} + 
 \D_{[DC,A] \circ B}  
{\notag}\\
 & + \D_{A \circ [B,DC]}   
  - \D_{[BA,C] \circ D} - \D_{ C\circ[D,BA] }
  + \K_{[BA, DC]} ,
\label{DD}\\
   \D_{ A\circ \bold{1}} &= \K_A , \quad
  \D_{ \bold{1}\circ B} = - \K_B .
\label{DK}
\end{align}

We investigate the   open quantum dynamics on $\H$ 
determined by the following type of master equation  
for $\rho(t)\in \T(\H)$:
\begin{equation}
     \frac{d}{dt}\rho(t) = -i\K_{H}(\rho(t)) 
     + \sum_j \D_{A_j\circ A_j^{\dagger}}(\rho(t)) \, , 
\label{MASTERdef}
\end{equation}
 where  $\K_H$ represents the infinitesimal change induced by 
a Hamiltonian $H$ of the corresponding closed quantum system, 
 and  the term $\sum_j  \D_{A_j\circ A_j^{\dagger}}$ 
with a set of    operators  $\{A_j$, $A_j^{\dagger}\}$ 
represents the effect of dissipation.

We make some remarks about the  paper.
We assume (implicitly) that superoperators act on  the  
 space of trace class operators rather than    
 on the space  of  Hilbert--Schmidt operators; 
    in some   literature  of mathematical  physics,
  the Hilbert--Schmidt class  seems to be  more common. 
 Throughout this paper, linear operators  in  a Hilbert space and 
 superoperators  are generically  unbounded.   
   From a mathematics standpoint,  our  result is
 a realization of a particular  open quantum dynamics with 
unbounded generators. 
We refer to \cite{SHW}, which suggests general  investigation of 
{\it{unbounded}}  open quantum dynamical systems.
Nevertheless, 
 we put more emphasis  on explicit calculations 
 of   this particular  model rather 
than on general arguments based on  functional analysis. 

%%%%%%%%%%%%%%%%%%%%%%%%%%%%%%%%%%%%%%%%%%%%%%%%%%%%
\section{Algebraic structure of open harmonic oscillator} 
\label{sec:HO}
%%%%%%%%%%%%%%%%%%%%%%%%%%%%%%%%%%%%%%%%%%%%%%%%%%%%
Because of the fundamental  importance of the open harmonic oscillator in 
open quantum systems, 
it is worth considering yet another  formalism that is easier to handle.    
We give such  in terms of the algebra of 
superoperators on the space of trace class operators 
on the one-mode Fock space.

Let $a$ and  $a^{\dagger}$ denote  
the annihilation and creation operators, respectively, acting  in  
 the one-mode Fock space  $\F$.   
The number operator is denoted as $N=a^{\dagger} a$.
It is known that the Fock space  $\F$ has the following  complete 
orthonormal {{system}}
$\{ \, |n \, \rangle\, \}_{n\in \N\cup\{0\}}$,  
such that
\begin{equation}
    N | n \rangle = n  | n \rangle \qquad (n \in \N\cup\{0\}), 
\label{Number}
\end{equation}
and 
\begin{equation}
    a |0\rangle =0, \quad  a^{\dagger} |0\rangle =|1 \rangle, 
\quad   
    a\, |  n \rangle = \sqrt{n} \, | n-1 \rangle, 
    \quad a^{\dagger}\, |  n \rangle = \sqrt{n+1} 
 \, | n+1 \rangle \quad (n \in \N). 
\label{hobase}
\end{equation}
Let us consider the master equation
\begin{equation}
     \frac{d}{dt}\rho(t) = \L_{\text{ph}}(\rho(t)) \qquad \mbox{ for }
     \quad \rho(t) \in \T(\F) 
\label{master0}
\end{equation}
with its Liouvillian 
\begin{equation}
   \L_{\text{ph}} :=  -i\omega\K_{N} 
  + \gamma\Big((J+1)\D_{a\circ a^{\dagger}} + J\D_{a^{\dagger}\circ a} \Big), 
\label{LIOU-ho}
\end{equation}
where $\omega>0$ denotes the angular frequency of the 
 oscillator, $\gamma > 0$ denotes the strength of dissipation,
and $J \geqslant 0$ is a parameter related to the temperature (of the hidden thermal bath) \cite{BP}.  
The subscript of $\L_{\text{ph}}$ signifies ``photon.''

We consider the algebraic structure of the Liouvillian $\L_{\text{ph}}$ and its constituents.   
From (\ref{KK}-\ref{DK}) and $[a, a^{\dagger}]  = \bold{1}$, we have
\begin{align}
    [\K_a, & \,\K_{a^{\dagger}}] = 0 \,, \quad
    [\K_a, \K_{N}] = \K_a \,, \quad
    [\K_{a^{\dagger}}, \K_{N}] = 
    - \K_{a^{\dagger}} \,,
\label{comKK} \\
    [\K_{a}, &\D_{a\circ a^{\dagger}}] = \K_a  \,, \quad
    [\K_{a^{\dagger}}, \D_{a\circ a^{\dagger}}] =
     \K_{a^{\dagger}} \,,  
\label{comKD1}
\\
    [\K_{a}, &\D_{a^{\dagger}\circ a}] = -\K_{a} 
\,, \quad 
    [\K_{a^{\dagger}}, \D_{a^{\dagger}\circ a}] =
    -\K_{a^{\dagger}} 
\label{comKD2}
\end{align}
and
\begin{equation}
    [\K_{N}, \D_{a\circ a^{\dagger}}] =
    [\K_{N}, \D_{a^{\dagger}\circ a}] = 0 \,,
\label{comKD3}
\end{equation}
\begin{equation}
    [\D_{a\circ a^{\dagger}}, \D_{a^{\dagger}\circ a}]
  = -2 \big( \D_{a\circ a^{\dagger}} + 
   \D_{a^{\dagger}\circ a}\big) \,.
\label{comDD}
\end{equation}

We then  investigate  the eigenvalue problem of the  Liouvillian (\ref{LIOU-ho}) using the algebraic relations listed  above.  
For convenience, we  define $ \Phi_{n,m} = \K_{a^{\dagger}}^n 
\K_a^m(|0\rangle \langle 0|)$ for $n, m = 0, 1, 2, \cdots$. 
We see that  the set $\{ \Phi_{n, m} \}_{n,m=0}^{\infty}$ 
 is total  in  $\T(\F)$, namely  its linear span is a dense subspace of 
$\T(\F)$.    
In fact, for each non-negative integer $k$, the identity  
Lin.Span$\{ \Phi_{n,m}  \, |  \, n \geqslant 0, \, k\geqslant m \geqslant 0 \} = $Lin.Span$\{ \, |n\rangle \langle m| \, | \, n \geqslant 0, k\geqslant m \geqslant 0 \}$ holds. 
We can show  this set of  identities   by induction on $k$ noting 
 $\Phi_{n,0} = \sqrt{n!}\,|n\rangle \langle 0|$ and $ \sqrt{m+1} \, |n\rangle\langle m+1| = \sqrt{n} \, |n-1\rangle\langle m| - \K_a(|n\rangle\langle m|) $. 
Using  (\ref{comKK}) and (\ref{comKD1}) inductively, we have 
\begin{equation}
    \K_{N} (\Phi_{n, m})
    = (n-m) \Phi_{n, m} \,, \qquad 
    \D_{a\circ a^{\dagger}} (\Phi_{n, m})
    = -(n+m) \Phi_{n, m} \qquad 
    ( n,m \in \N \cup \{ 0 \} \, ) \, .
\label{KDeigen}
\end{equation}
Thus  all  $\Phi_{n,m}$ are common eigenvectors for both  
$\K_{N}$  and $\D_{a\circ a^{\dagger}}$,  
 but  not  for $\D_{ a^{\dagger}\circ a}$.   
In the next paragraph, 
 we   focus on   $\D_{a^{\dagger}\circ a}$.  

We investigate the semigroup  generated by $\D_{a^{\dagger}\circ a}$. Set 
\begin{equation}
   S_t(\rho)  =
   \sum_{n=0}^{\infty}\frac{(1-e^{-2t})^n}{n!}
   a^{\dagger n}e^{-taa^{\dagger}} \rho \, 
   e^{-taa^{\dagger}} a^n  
\label{SEMI-aadag}
\end{equation}
for $ t \geqslant 0$ and $ \rho \in \T(\F)$.   
It is straightforward to derive the following properties: 
\begin{equation}
     \frac{d}{dt}S_t(\rho) = 
    \D_{a^{\dagger}\circ a} (S_t(\rho)) \,, \qquad
     S_{t_1+t_2}(\rho) = S_{t_1}\big(S_{t_2}(\rho)\big) \,,  \qquad   S_0(\rho) = \rho \,. 
\label{3eq}
\end{equation}
The semigroup $S_t$ defined above 
is CPTP 
because by  definition (\ref{SEMI-aadag}) 
 it   has the following  
form of  CPTP maps (e.g., see \cite{NC} \S\S8.2.4, \cite{AL}):
\begin{equation}
S_t (\rho) = \sum_{n=0}^{\infty} E_n(t)\rho E_n^{\dagger}(t) 
\label{CP}
\end{equation}
with  bounded operators $E_n(t)$ satisfying the normalization condition
\begin{equation}
    \sum_{n=0}^{\infty} E_n^{\dagger}(t) E_n(t) =  \bold{1}    
\label{trace=1}
\end{equation}
for every  $ t \geqslant 0$.  
Precisely, conditions (\ref{CP}) and (\ref{trace=1}) are satisfied by 
\begin{equation}
 E_n(t) =  \frac{(1-e^{-2t})^{n/2}}{\sqrt{n!}}   a^{\dagger n}e^{-taa^{\dagger}}\in\BL(\F)  \,.
\label{CPgutai}
\end{equation}
Similarly
$\D_{a \circ a^{\dagger}}$ generates a one-parameter semigroup of 
CPTP maps  on $\T(\F)$, as  the following formula holds:   
\begin{equation}
   e^{t\D_{a \circ a^{\dagger}}} (\rho) =
   \sum_{n=0}^{\infty}\frac{(e^{2t} - 1)^n}{n!}
   a^ne^{-ta^{\dagger}a} \rho \, 
   e^{-ta^{\dagger}a} a^{\dagger n}  \,.  
\label{semigroup2}
\end{equation}

We show the following identity of  superoperators:  
\begin{equation}
   e^{\tau\D_{a^{\dagger}\circ a}}  
  \D_{a\circ a^{\dagger}} = 
  \big(e^{2\tau} \D_{a\circ a^{\dagger}} 
   +(e^{2\tau}-1)   \D_{a^{\dagger}\circ a} \big)
   e^{\tau\D_{a^{\dagger}\circ a}} \qquad ( t\geqslant 0)\,. 
\label{transform0}
\end{equation}
Denote  the right-hand side of (\ref{transform0}) by  $X_{\tau}$.   
Then  we   immediately see that 
$ X_0 = \D_{a\circ a^{\dagger}}$ and  $
     \frac{d}{d\tau}X_{\tau} = 
       \D_{a^{\dagger}\circ a}X_{\tau}$  
because of  (\ref{comDD}).

Setting $\tau \geqslant 0$ by  
\begin{equation}
  e^{2\tau} = J + 1
\label{tau-KIGOU}
\end{equation}
together with (\ref{comKD3}), we have
\begin{equation}
   e^{\tau\D_{a^{\dagger}\circ a}} 
   \big(-i\omega\K_{N} +   
   \gamma \D_{a\circ a^{\dagger}} \big) =      
    \L_{\text{ph}} \, e^{\tau\D_{a^{\dagger}\circ a}}  \, . 
\label{transform1}
\end{equation}
Because of (\ref{KDeigen}), (\ref{transform0}), and (\ref{transform1}),
 we obtain  
\begin{equation}
\Big((J+1)\D_{a\circ a^{\dagger}} + J\D_{a^{\dagger}\circ a} \Big)  
 e^{\tau\D_{a^{\dagger}\circ a}}(\Phi_{n,m})
  = - (n+m)
  e^{\tau\D_{a^{\dagger}\circ a}} (\Phi_{n,m})
  \qquad (n,m \in \N \cup \{ 0 \}) 
\label{JD0eigen}
\end{equation}
and  
\begin{equation}
  \L_{\text{ph}} e^{\tau\D_{a^{\dagger}\circ a}}(\Phi_{n,m})
  =\big( -i\omega(n-m) - \gamma(n+m)\big)
  e^{\tau\D_{a^{\dagger}\circ a}} (\Phi_{n,m})
  \qquad (n,m \in \N \cup \{ 0 \}) \,.
\label{L0eigen}
\end{equation}

To see that this gives 
 the complete solution of the eigenvalue problem 
for  $\L_{\text{ph}}$, it is enough to show that the range of 
$ \, e^{\tau\D_{a^{\dagger}\circ a}} \, $  is dense in 
$\T(\F)$, because 
$\{  \Phi_{n,m} \}_{n,m=0}^{\infty}$ 
is total in $\T(\F)$. 
We note that the right-hand side of (\ref{SEMI-aadag}) 
is still  well defined for $ t < 0$ and gives  
a densely defined operator as the (left) inverse of 
$ \, e^{|t|\D_{a^{\dagger}\circ a}} \, $
when $|t|$ is small (i.e., $|1-e^{-2t}| < 1$).  
Hence the range of $ \, e^{\tau\D_{a^{\dagger}\circ a}} \, $  
is dense in $\T(\F)$ for small $\tau >0\,$. 
The bounded semigroup property ensures the same for 
arbitrary $\tau >0$. 
We refer to 
previous work  \cite{CS, BE, EFS,HNY}
for  the information regarding the eigenvalue problem of   
$\L_{\text{ph}}$.

Let us consider the semigroup $e^{t\L_{\text{ph}}}$.  
Because of the commutativity (\ref{comKD3}),  we have 
\begin{equation}
      e^{t\L_{\text{ph}}}  =  e^{-it\omega\K_N}e^{t\gamma\big((J+1)\D_{a\circ a^{\dagger}}
      + J \D_{a^{\dagger}\circ a}\big)}.
\label{evolutionO} 
\end{equation}
We have also  
\begin{align}
 e^{t\gamma\big((J+1)\D_{a\circ a^{\dagger}}
      +J \D_{a^{\dagger}\circ a}\big)}
     =   e^{\tau_1(t\gamma)\D_{a^{\dagger}\circ a}}
      e^{(t\gamma+\tau_1(t\gamma))\D_{a\circ a^{\dagger}}} \,,
\label{evolutionD}
\end{align}
where
 \begin{equation}
          \tau_1(s) := \frac{1}{2}\log(J+1-Je^{-2s})       
           \qquad \mbox{for} \quad s\geqslant 0 \,. 
\label{tauone}
\end{equation}
To see (\ref{evolutionD}), we set its right-hand side as $Y(t)$.   
Then $Y(0) = \bf{1}$ holds  and its derivative satisfies  
\begin{align*}
       \frac{dY(t)}{dt} =& \gamma\tau_1'(t\gamma) \D_{a^{\dagger}\circ a}
       e^{\tau_1(t\gamma)\D_{a^{\dagger}\circ a}}
      e^{(t\gamma+\tau_1(t\gamma))\D_{a\circ a^{\dagger}}} 
\\
   & + e^{\tau_1(t\gamma)\D_{a^{\dagger}\circ a}}  \gamma\big( 1+\tau_1'(t\gamma) \big)
     \D_{a\circ a^{\dagger}}  
      e^{(t\gamma+\tau_1(t\gamma))\D_{a\circ a^{\dagger}}}       
\\
    =& \gamma\big(\tau_1'(t\gamma) \D_{a^{\dagger}\circ a} + 
      \big( 1 + \tau_1'(t\gamma)\big)(e^{2\tau_1(t\gamma)} \D_{a\circ a^{\dagger}} 
      + (e^{2\tau_1(t\gamma)} - 1)\D_{a^{\dagger}\circ a})\big) Y(t)
\\
     =& \gamma\big((J+1)\D_{a\circ a^{\dagger}} + J\D_{a^{\dagger}\circ a}\big) Y(t)  \,, 
\end{align*}
where  we have used (\ref{transform0}).   
With (\ref{evolutionO}),
 (\ref{evolutionD}), and (\ref{tauone}),
we  obtain  the following decomposition formula of $e^{t\L_{\text{ph}}}$: 
\begin{equation}
    e^{t\L_{\text{ph}}}         
 =  e^{-it\omega\K_N} e^{\tau_1(t\gamma)\D_{a^{\dagger}\circ a}}
      e^{(t\gamma+\tau_1(t\gamma))\D_{a\circ a^{\dagger}}} \,.
\label{decomposition}
\end{equation}
Because  $ e^{-it\omega K_{N} }$  (which generates   a unitary evolution) is 
 obviously a CPTP semigroup on $\T(\F)$, 
and both  $e^{t\D_{a^{\dagger}\circ a}}$
and $e^{t\D_{a\circ a^{\dagger}}}$ 
are CPTP maps as we have seen,  
$e^{t\L_{\text{ph}}}$  given as the composition of 
these CPTP maps in (\ref{decomposition}) is also  a 
 CPTP map.

In the following, we discuss  the asymptotic behavior of 
 the dynamical semigroup  $\{ e^{t \L_{\text{ph}}} \}_{t\geqslant 0}$.  
From (\ref{SEMI-aadag}) and (\ref{tau-KIGOU}), we have  
\begin{equation}
    e^{\tau\D_{a^{\dagger}\circ a}} \,(\Phi_{0,0}) 
    = S_{\tau}(\Phi_{00}) 
  = \sum_{n=0}^{\infty}\frac{e^{-\beta\omega n}}{Z}
    |\, n\rangle\langle n\,|  \equiv \rho_{\text{ph}, \text{G}}, \qquad \beta = \omega^{-1}\log(1+ J^{-1}). 
\label{Gibbs0}
\end{equation}
The   right-hand  side of the above equality is 
the Gibbs state with respect to 
the Hamiltonian $\omega N$ at  the inverse temperature $\beta$.
It follows from (\ref{L0eigen})  that $\rho(t)$ 
approaches the Gibbs state (\ref{Gibbs0}) in 
trace norm as $ t \to \infty$ from arbitrary initial state $\rho(0)$.   
To see this asymptotic  property, we  note that the trace-preserving 
property of $ e^{\tau\D_{a^{\dagger}\circ a}} $ and the trace property 
yield
\begin{equation}
    \Tr[e^{\tau\D_{a^{\dagger}\circ a}}(\Phi_{n, m})] = 
    \Tr[\Phi_{n, m}]  = \delta_{n, 0}\delta_{m, 0} \qquad 
    ( n,m \in \N \cup \{ 0 \} ) \, .
\label{nm}
\end{equation}
Because  the trace-preserving 
map $e^{\tau\D_{a^{\dagger}\circ a}}:\T(\F)\rightarrow \T(\F)$ 
has  dense range with respect to the trace norm,  
any density matrix $\rho_0 $ is 
approximated by the finite linear combination      
\begin{equation*}
  \sum_{n, m \geqslant 0} c_{n, m} 
     e^{\tau\D_{a^{\dagger}\circ a}}(\Phi_{n, m}), \quad c_{n,m}\in \C      
\end{equation*}
in $\T(\F)$.    
Note that  $ c_{0, 0} =1$ because of 
$\Tr(\rho_0)=1$ and  (\ref{nm}).  
From (\ref{L0eigen}) and (\ref{Gibbs0}), we obtain 
\begin{equation}
      \lim_{t \to \infty} e^{t\L_{\text{ph}}}\big(\rho_0\big)  =   
       \sum_{n, m \geqslant 0}c_{n, m}\lim_{t \to \infty} 
   e^{t\L_{\text{ph}}}\big(e^{\tau\D_{a^{\dagger}\circ a}}(\Phi_{n, m})\big)
 = \ e^{\tau\D_{a^{\dagger}\circ a}} (\Phi_{0, 0})
 = \ \rho_{\text{ph}, \text{G}}  \,.  
\label{toGibbs}
\end{equation}
The above heuristic  limiting procedure can be made rigorous 
by using the uniform boundedness of CPTP semigroup 
$ \{ \,e^{t\L_{\text{ph}}}\,\}_{t \geqslant 0}$.     
From  (\ref{toGibbs}) it follows that 
 for any $ t\geqslant 0$,
\begin{equation}
       e^{t\L_{\text{ph}}}\big(\rho_{\text{ph}, \text{G}}\big) 
 = \rho_{\text{ph}, \text{G}}  \,.  
\label{INVGibbs}
\end{equation}
Recall that  the Gibbs state $\rho_{\text{ph}, \text{G}}$
is with respect to  the Hamiltonian $\omega N$, so it is also 
 invariant under the  unitary evolution generated by  $\K_N$: For 
 any  $ t\in \R$, we have 
\begin{equation}
     e^{-it\omega\K_N}(\rho_{\text{ph}, \text{G}}) = \rho_{\text{ph}, \text{G}}  \,.   
\label{Gibbsinv}
\end{equation}

%%%%%%%%%%%%%%%%%%%%%%%%%%%%%%%%%%%%%%%%%%%%%%%%%%
\section{Infinite-component vector spin system}  
\label{sec:SPIN}
%%%%%%%%%%%%%%%%%%%%%%%%%%%%%%%%%%%%%%%%%%%%%%%%%%
For the matter, we consider  an  infinite-component spin system \cite{D1}.  We introduce this system  by employing 
 the algebraic  formulation  as  in \S2.

Let $\G$ be a Hilbert space and $\{ \, |n) \}_{n\in \Z}$
be  a complete 
orthonormal system  of $\G$. 
As operators on $\G$, we consider $l_{\pm}$ and $M$ defined  by
\begin{equation}
  M\, |n) = n\, |n), \quad
           l_{\pm}|n) = \, |n \pm 1) 
          \qquad ( n \in \Z ) \,.   
\label{Mspec}
\end{equation}
The following fundamental relations hold: 
\begin{equation}
       [M, l_{\pm}] = \pm l_{\pm}, \qquad 
       l_+l_-=l_-l_+ = \bold{1}.
\label{com1}
\end{equation}

\begin{rem} 
\label{rem:ALG}    
\rm{
In the Dicke model, the $n$-compound system of two-component spins is resolved into a set of irreducible representations. 
Picking a representation whose total spin is $\ell$ from the set, we 
realize it in $\G$ as 
\[
J_{\pm,\ell} \, |m) = \sqrt{(\ell \mp m)(\ell \pm m + 1)} \, |m \pm 1), \quad 
J_{3, \ell} \, |m) = m \, |m) \qquad ( \, |m| \leqslant \ell \,) ,
\]
\[
     J_{\pm,\ell} \, |m) = J_{3, \ell} \, |m) = 0 \qquad ( \, |m| > \ell \,)\,. 
\]
Then we have operators $l_{\pm}$ as  
\[
    {\textrm{s-}}\!\lim_{\ell \to \infty} \frac{J_{\pm,\ell}}{\ell} 
    = l_{\pm}, 
\]
where s-lim stands for the strong limit of operators on the Hilbert space $\G$, that is, 
$ J_{\pm,\ell} |\psi\rangle /\ell \to l_{\pm} |\psi \rangle $ in 
 the norm of $\G$ for arbitrary $ |\psi \rangle \in \G$.
The convergence $ J_{3,\ell} \to M$ also holds in norm on the domain of
$M$. Moreover in \cite{D1}, it has been shown that the Hamiltonian  
\begin{equation}
      H_{\ell} = \omega \, \bold{1}\otimes a^{\dagger}a 
     + \mu \, J_{3, \ell}\otimes\bold{1} + \frac{\lambda}{\ell} 
     \big(J_{+, \ell}\otimes a +  J_{-, \ell}\otimes a^{\dagger}\big)
\label{lSDHamiltonian}
\end{equation}
of the $(2\ell +1)$-component Dicke model acting in $\G\bigotimes\F$ 
converges to the Hamiltonian 
\begin{equation}
     H = \omega \, \bold{1}\otimes a^{\dagger}a 
    + \mu \, M\otimes\bold{1}   
       + \lambda\big(l_+\otimes a +  l_-\otimes a^{\dagger}\big) 
\label{ISDHamiltonian}
\end{equation}
of the ISD model in the strong generalized sense, i.e., 
\[
      \textrm{s-}\!\lim_{\ell\to\infty}e^{-itH_{\ell}} = 
      e^{-itH} 
\]
on $\G\bigotimes\F$.  
Intuitively, those strong limits give the idealization of the property 
of corresponding operators that hold for the action only on the 
vectors consisting of linear combinations of $|m)$ satisfying 
$ |m| \ll \ell$ for large but finite $\ell$.   
In the rest of this section, we deal with the dissipative infinite-component 
spin system described by $l_{\pm}$ and $M$.    
We consider the open system on $\G\bigotimes\F$ given by the 
Hamiltonian (\ref{ISDHamiltonian}) and dissipation terms in the next section.   }
\end{rem}

We consider the master equation
\begin{equation}
     \frac{d}{dt}\rho(t) = \L_{\text{sp}}(\rho(t))  \qquad \mbox{for}
     \qquad \rho(t) \in \T(\G) \,,
\label{master1}
\end{equation}
where the   Liouvillian is defined by 
\begin{equation}
   \L_{\text{sp}} :=  -i\mu\K_{M} 
  + \alpha_-\D_{l_-\circ l_+} + \alpha_+ 
  \D_{l_+\circ l_-}  
\label{LIOU-spin}
\end{equation}
with constants $\mu > 0, \, \alpha_{\pm} \geqslant 0$.   
The subscript of $\L_{\text{sp}}$ signifies ``spin.''   
We consider  mainly the case of $\alpha_{\pm}=0$,  
i.e., no dissipation for the spin as a component of the OISD model in \S\ref{sec:OISD}.  
However, it will become  clear that 
 treating both  cases with or without dissipation 
 on an equal footing
 is   helpful for our discussion.

We discuss the   behavior of the open quantum 
 dynamics  generated by the Liouvillian $\L_{\text{sp}}$.
Its qualitative picture may be described as follows.
Because its Hamiltonian part generated by $M$ is not lower bounded (\ref{Mspec}), 
  it is considered  to be unstable.   
One may guess that 
if  we add  dissipation   
as  in the Liouvillian
(\ref{LIOU-spin}), then  any  density matrix made by
 the  eigenstates  $\{ \, |n) \}_{n\in \Z}$ for  $M$ will be  no
 longer  stable under   the dissipative dynamics
  generated by  $\L_{\text{sp}}$.  
Actually, we verify  this  naive  picture in the following.  

We consider the algebraic structure of the Liouvillian  $\L_{\text{sp}}$ and its 
constituents.    
From (\ref{KK}), (\ref{DD}), and (\ref{com1}), we have
\begin{equation}
    [\K_{M}, \D_{l_-\circ l_+}] = 
    [\K_{M}, \D_{l_+\circ l_-}] =
    [\D_{l_-\circ l_+}, \D_{l_+\circ l_-}] = 0 \,.
\label{comKDD1} 
\end{equation}
We  have also 
\begin{equation}
  e^{-i\mu t\K_M}(\rho) = e^{-i\mu tM}\rho 
     \, e^{i\mu tM}\,, 
\end{equation}
\begin{equation}
   e^{t\alpha_-\D_{l_-\circ l_+}}(\rho) =
   \sum_{n=0}^{\infty}\frac{(2t\alpha_-)^n}{n!}
   e^{-2t\alpha_-} l_-^n\rho l_+^n,
\end{equation}
and
\begin{equation}
   e^{t\alpha_+\D_{l_+\circ l_-}}(\rho) =
   \sum_{m=0}^{\infty}\frac{(2t\alpha_+)^m}{m!}
   e^{-2t\alpha_+} l_+^m\rho l_-^m   
\end{equation}
for  $ \rho \in \T(\G)$.  
From these expressions combined with the commutativity relations 
(\ref{comKDD1}), 
the solution of the master equation  is given as
\begin{align}
   e^{t\L_{\text{sp}}}(\rho) =& e^{t\alpha_-\D_{l_-\circ l_+}}
   e^{t\alpha_+\D_{l_+\circ l_-}}
   e^{-i\mu t\K_M}(\rho)
\notag \\
   = & \sum_{n,m =0}^{\infty}
    \frac{(2t\alpha_-)^n(2t\alpha_+)^m}{n!m!}
    e^{-2t(\alpha_-+\alpha_+)}
   l_-^{n-m} e^{-i\mu tM}\rho 
     \, e^{i\mu tM}  l_+^{n-m}
 \notag \\
   = & \sum_{k= -\infty}^{\infty}c_k(t)
   l_-^k e^{-i\mu tM}\rho 
     \, e^{i\mu tM}  l_+^k,
\label{evol1}
\end{align}
where
\begin{equation}
     c_k(t) \equiv \sum_{n,m =0}^{\infty}\delta_{n-m, k}
     \frac{(2t\alpha_-)^n(2t\alpha_+)^m}{n!m!}
    e^{-2t(\alpha_-+\alpha_+)} \,.
\label{ckt}
\end{equation}

The set of coefficients $\{ \, c_k(t) \, \}_{k \in \Z}$ 
given in (\ref{ckt}) may be 
considered as a time-dependent probability distribution in the sense 
\begin{equation}
           c_k(t) \geqslant 0, \qquad 
          \sum_{k= -\infty}^{\infty}c_k(t) = 1 \, ,
\label{prob}
\end{equation}
from which it follows that the evolution (\ref{evol1}) has the CPTP property.   
On the other hand, the behavior of its mean and variance, namely 
\begin{equation}
    \sum_{k= -\infty}^{\infty} k c_k(t) = 
      2(\alpha_--\alpha_+)t, \qquad 
      \sum_{k= -\infty}^{\infty} k^2 c_k(t)
    - \Big(\sum_{k= -\infty}^{\infty} k c_k(t) \Big)^2 
       = 2(\alpha_-+\alpha_+)t,  
\label{meanvariance}
\end{equation}
exhibits the floating and diffusive nature of the evolution.  
Moreover, it can be shown that 
\begin{equation}
    \textrm{s-}\!\lim_{t \to \infty}e^{t\L_{\text{sp}}}(\rho) = 0
\label{slim}
\end{equation}
holds for arbitrary initial state $\rho$ 
unless $(\alpha_{+}, \alpha_{-} ) =  (0,0) $.     
In particular, with such  non-trivial dissipation, 
there exists  no eigenstate for 
$\L_{\text{sp}}$ and there exists  no  steady 
(i.e., temporally invariant) state for the dynamical semigroup 
$\{e^{t\L_{\text{sp}}}\}_{t \geqslant 0}$.     
As an approximation to the $(2\ell + 1)$-component spin system, 
the evolution formula (\ref{evol1}) should not be used beyond the 
restriction $t\ll \ell/2(\alpha_+ + \alpha_-)$.     
%%%%%%%%%%%%%%%%%%%%%%%%%%%%%%%%%%%%%%%%%%%%%%%%%%%%%%%%%
\section{{Open infinite-spin Dicke model}}  
\label{sec:OISD}
%%%%%%%%%%%%%%%%%%%%%%%%%%%%%%%%%%%%%%%%%%%%%%%%%%%%%%%%%
In this section, 
 we    investigate the OISD model,  
 an open quantum model  generated 
by  the infinite-component  spin in  \S\ref{sec:SPIN} 
and  the open harmonic oscillator  in \S\ref{sec:HO} 
with the Jaynes--Cummings interaction between them.  

We provide the  precise formulation of the OISD model in the following.   
The Hilbert space of the  system is $\H = \G \bigotimes \F$. 
Operators acting in $\H$ such as  
$M\otimes \bold{1}$,  $\bold{1}\otimes a^{\dagger}a$, and $l_+ \otimes a$ 
 are denoted simply as $M$, $a^{\dagger}a$, and $l_+a$, respectively, by  obvious embedding.   
We introduce  the shorthand  notation
\begin{align}
   \K_{\text{sp}} \equiv  \K_M, \quad  \K^{\text{int}}_{-+} 
\equiv  \K_{l_- a^{\dagger}}, 
   \quad \K^{\text{int}}_{+-} \equiv  \K_{l_+a}, \quad 
   \K^{\text{int}} \equiv  \K^{\text{int}}_{+-}+\K^{\text{int}}_{-+}, 
\quad 
\K_{\text{ph}}\equiv  \K_{N}, 
\end{align}
and for nonnegative constant $J$, 
\begin{align}
\D_{\text{sp}} \equiv  & (J+1)\D_{l_-\circ l_+} + J\D_{l_+\circ l_-},\  
\D_{\text{ph}} \equiv  (J+1)\D_{a\circ a^{\dagger}} + J\D_{a^{\dagger}\circ a},   \notag  \\
\D^{\text{int}}_{-+}   \equiv  & (J+1)\D_{l_-\circ a^{\dagger}} + J\D_{a^{\dagger}\circ l_-},\  \D^{\text{int}}_{+-} \equiv   (J+1)\D_{a\circ l_+} + J\D_{l_+\circ a}.
\end{align}
With this  notation,  the Liouvillian of the OISD model is defined  by 
\begin{equation}
   \L_{\text{OISD}}
 :=  -i\mu\K_{\text{sp}}  -i\lambda \K^{\text{int}} -i\omega\K_{\text{ph}} 
       + \gamma\D_{\text{ph}}   
\label{LIOU-OISD}
\end{equation}
with positive constants $\omega, \mu$, and $\gamma$.   
As we have anticipated, 
 $\K^{\text{int} } $ is  
the Jaynes--Cummings interaction between  the spin and the  oscillator, and  
the constant  $\lambda\in \R$ denotes the strength of this interaction.   
The time evolution on the composed system 
is governed by the above Liouvillian as  
\begin{equation}
     \frac{d}{dt}\rho(t) = \L_{\text{OISD}}(\rho(t)) \qquad \mbox{for}\quad \rho(t) \in \T(\H) \,.
\label{master}
\end{equation}  

Note that  in  the Liouvillian  
(\ref{LIOU-OISD}) of the OISD model,  
  dissipation is induced  
 only through  the  harmonic oscillator (not through the spin).
Later,  we discuss   a  more general   Liouvillian 
that  has   dissipation terms   both for  
the  harmonic oscillator and the infinite-component spin. 
It turns out that the analysis for such a general  model 
is essentially reduced to the 
 simple case  (\ref{LIOU-OISD}). Hence we focus 
 on this special setup for our OISD model.

Let us analyze the algebraic structure of the Liouvillian 
$\L_{\text{OISD}}$ in (\ref{LIOU-OISD}).
 We can straightforwardly check the following commutation relations:
\begin{align}
&[\K_{\text{ph}}, \K^{\text{int}}_{-+}] 
= - [\K_{\text{sp}}, \K^{\text{int}}_{-+}] 
   =  [\D_{\text{ph}}, \D^{\text{int}}_{-+}] = + \K^{\text{int}}_{\mp\pm}, \notag\\
  &[\K_{\text{ph}}, \K^{\text{int}}_{+-}] 
= - [\K_{\text{sp}}, \K^{\text{int}}_{+-}] 
   =  [\D_{\text{ph}}, \D^{\text{int}}_{+-}] = - \K^{\text{int}}_{\mp\pm}, \notag\\
 & [\D_{\text{ph}}, \K^{\text{int}}_{-+}] = + \D^{\text{int}}_{-+},\ 
  [\D_{\text{ph}}, \K^{\text{int}}_{+-}] = - \D^{\text{int}}_{+-} , \notag\\ 
&  [\K^{\text{int}}_{-+}, \D^{\text{int}}_{+-}] = - \D_{\text{sp}} ,\  
  [\K^{\text{int}}_{+-}, \D^{\text{int}}_{-+}] = + \D_{\text{sp}} , 
\label{KDpm} 
\end{align}
and 
\begin{align}
   [\K_{\text{ph}}, \D_{\text{ph}}] = [\K_{\text{sp}}, \D_{\text{ph}}] 
   = [\K^{\text{int}}_{-+}, \K^{\text{int}}_{+-}] = 
  [\K^{\text{int}}_{-+}, \D^{\text{int}}_{-+}] 
 =  [\K^{\text{int}}_{+-}, \D^{\text{int}}_{+-}]=0, \notag\\ 
  [\D_{\text{sp}}, \K^{\text{int}}_{-+}] 
 = [\D_{\text{sp}}, \K^{\text{int}}_{+-}] 
   = [\D_{\text{sp}}, \D_{\text{ph}}] = 0.
\label{KD0}
\end{align}

For $\eta \in \C$, define 
\begin{equation}
\label{DEFWeta}                 
W(\eta) := e^{\eta \K^{\text{int}}_{-+} - \bar{\eta}\K^{\text{int}}_{+-}}.    
\end{equation}
We see  that for any $\eta \in \C$,  
$W(\eta)$ is  a  CPTP map with  
 its  bounded inverse $W(-\eta)$. This fact will be important later.  
From (\ref{KDpm}) and (\ref{KD0}), we have 
\begin{equation}
W(\eta) \left(
        \begin{array}{c}
        \K_{\text{sp}} \\
        \K^{\text{int}}_{-+} \\
        \K^{\text{int}}_{+-} \\
         \K_{\text{ph}} \\
        \D_{\text{sp}} \\
        \D^{\text{int}}_{-+}  \\
        \D^{\text{int}}_{+-}  \\
        \D_{\text{ph}} 
        \end{array}
    \right) W(-\eta) = \left(\begin{array}{c}
        \K_{\text{sp}} + \eta \K^{\text{int}}_{-+} + \bar{\eta}\K^{\text{int}}_{+-} \\
  \K^{\text{int}}_{-+} \\
        \K^{\text{int}}_{+-} \\
        \K_{\text{ph}} - \eta \K^{\text{int}}_{-+} - \bar{\eta}\K^{\text{int}}_{+-} \\
        \D_{\text{sp}} \\
        \D^{\text{int}}_{-+} - \bar{\eta}\D_{\text{sp}}  \\
        \D^{\text{int}}_{+-} -\eta\D_{\text{sp}} \\
        \D_{\text{ph}} - \eta \D^{\text{int}}_{-+} - \bar{\eta}\D^{\text{int}}_{+-}  
   + |\eta|^2\D_{\text{sp}} 
        \end{array}
    \right),
\label{vecCR1}
\end{equation}
where the adjoint  action $\text{Ad}(W(\eta))$ acts on each component.   
Note that higher terms of $\eta$ and $ \bar{\eta} $ 
 vanish in the right-hand side because of commutativity (\ref{KD0}).    
For any $ \sigma, t \geqslant 0$, 
\begin{equation}
           e^{\sigma\D_{\text{ph}}}\left(
      \begin{array}{c}
      \K^{\text{int}}_{-+} \\ \K^{\text{int}}_{+-} \\ \D^{\text{int}}_{-+} \\ 
      \D^{\text{int}}_{+-}
      \end{array} \right) = \left(
      \begin{array}{c}
      \K^{\text{int}}_{-+} \cosh \sigma + \D^{\text{int}}_{-+} \sinh \sigma \\
      \K^{\text{int}}_{+-}\cosh \sigma  - \D^{\text{int}}_{+-} \sinh \sigma \\ 
      \D^{\text{int}}_{-+}\cosh \sigma  + \K^{\text{int}}_{-+} \sinh \sigma \\ 
     \D^{\text{int}}_{+-}\cosh \sigma  - \K^{\text{int}}_{+-} \sinh \sigma
      \end{array} \right) e^{\sigma\D_{\text{ph}}}
\label{vecCR2}                 
\end{equation}
and
\begin{equation}
           e^{-it(\mu\K_{\text{sp}} +\omega\K_{\text{ph}})}\left(
      \begin{array}{c}
      \K^{\text{int}}_{-+} \\ \K^{\text{int}}_{+-}      
      \end{array} \right) = \left(
      \begin{array}{c}
      e^{-it(\omega - \mu)}\K^{\text{int}}_{-+} \\
      e^{it(\omega - \mu)}\K^{\text{int}}_{+-}
      \end{array} \right) e^{-it(\mu\K_{\text{sp}} +\omega\K_{\text{ph}})}  
\label{vecCR3}                 
\end{equation}
hold.   
These identities can be  verified by repeating an argument  similar to that used in  
(\ref{transform0}).

We can see that  $e^{t\L_{\text{OISD}}}$ is a CPTP map on $\T(\H)$   
for each $t \geqslant 0 $ as follows.
 Note that 
 $e^{t\gamma \D_{\text{ph}}}$ is identical to  
      ${\bf{1}}\otimes \exp\big(t\gamma (J+1)\D_{a\circ a^{\dagger}} 
      + t\gamma J\D_{a^{\dagger}\circ a} \big)$.
Because its  second factor  is completely positive on 
$\T(\F)$ as we have seen in \S\ref{sec:HO}, 
  $e^{t\gamma \D_{\text{ph}}}$ is   
a completely positive map  on $ \T(\H) = \T(\G) \bigotimes \T(\F)$.     
The trace-preserving property  follows 
from the tensor-product structure as well.  
Similarly, $e^{t\gamma'\D_{\text{sp}}}$ and thereby 
$e^{t(\gamma\D_{\text{ph}} + \gamma'\D_{\text{sp}})} = 
e^{t\gamma\D_{\text{ph}}}e^{t\gamma'\D_{\text{sp}}} $ 
are CPTP maps for $\gamma' > 0$.  
We have the following  formula 
 of the semigroup generated by $\L_{\text{OISD}}$: 
\begin{equation}
   e^{t\L_{\text{OISD}}} = e^{-it(\mu\K_{\text{sp}} +\omega\K_{\text{ph}})}
            W\big(\eta_1(t)\big)e^{t\gamma\D_{\text{ph}} +\tau_2(t)\D_{\text{sp}}}
            W\big(\eta_2(t)\big) \,,  
\label{prod_L}
\end{equation}
where $\eta_1(t), \eta_2(t)$, and $\tau_2(t)$ are the solutions of the differential equations
\begin{align}
   &\eta_2'(t) \sinh t\gamma - \gamma\eta_1(t) = 0, 
\label{eta2} \\
  &\big(\eta_1'(t) + \gamma \eta_1(t)\coth t\gamma\big)e^{-it(\omega-\mu)} +i\lambda = 0, 
\label{eta1} \\
  &\tau_2'(t) - \gamma |\eta_1(t)|^2 = 0
\label{tau2}
\end{align}
with initial condition $ \eta_1(0) = \eta_2(0) = \tau_2(0) = 0$.  
The solution for $\eta_1(t)$ is given explicitly by 
\begin{equation}
         \eta_1(t) = \frac{i\lambda}{(\omega-\mu)^2 + \gamma^2}
          \Big( i(\omega - \mu)e^{i(\omega-\mu)t} 
          + \frac{\gamma(1-e^{i(\omega-\mu)t}\cosh \gamma t)}{\sinh \gamma t} \Big) \,, 
\end{equation}
and $\eta_2(t)$ and $\tau_2(t)$ are obtained readily from (\ref{eta2}) and (\ref{tau2}).  
The identity  (\ref{prod_L})  can be shown as in  (\ref{evolutionD}).   
Namely  
by putting the right hand-side by $Z(t)$, we have
\begin{equation}
          Z(0) = {\bf{1}}, \qquad \frac{dZ(t)}{dt} = \L_{\text{OISD}} (Z(t))
\end{equation}
with the help of (\ref{vecCR1}), (\ref{vecCR2}), and (\ref{vecCR3}).   
Because each factor in the right-hand side of (\ref{prod_L}) is CPTP, $e^{t\L_{\text{OISD}}}$ is also CPTP.

Next  we derive  the decomposition formula 
of the Liouvillian $\L_{\text{OISD}}$. For this purpose, we need a  
 map that gives rise to the decomposition.  
 By using $\D_{\text{ph}}$ and 
$W(\eta)$ defined  in (\ref{DEFWeta}),                 
we set 
\begin{equation}
   V(\sigma)  := W(\zeta_1) e^{\sigma\D_{\text{ph}}} W(\zeta_2), \qquad 
\sigma >0,  
\label{V-DEF}
\end{equation}
where  the constants are defined  by
\begin{equation}
  \delta := \frac{\lambda}{\gamma^2 + (\omega - \mu)^2} \,,  
  \quad 
   \zeta_1 := -(\omega - \mu + i\gamma\coth \sigma)\delta \,, \quad 
   \zeta_2 := \frac{i\gamma\delta}{\sinh \sigma}. 
\label{V-constans-function}
\end{equation}
From (\ref{vecCR1}) and (\ref{vecCR2}),
we obtain
\begin{align*}
\Bigl(  -i\mu\K_{\text{sp}}   -i\lambda \K^{\text{int}}  
-i\omega\K_{\text{ph}} 
+ \gamma\D_{\text{ph}}\Bigr) V(\sigma)
    = V(\sigma) & \Bigl(  - i\mu\K_{\text{sp}}    
    +  \lambda \gamma\delta\D_{\text{sp}}  
    -i\omega\K_{\text{ph}}  + \gamma\D_{\text{ph}}  \Bigr).
\end{align*}
Thus, we arrive at our  first main result as follows.  

\begin{prop}
\label{prop:decom}
The Liouvillian  $\L_{\text{OISD}}$ of the OISD model 
 is decomposed into the  decoupled 
Liouvillian $\L_\text{decoupled}$
 under the similarity transformation induced by $V(\sigma)$ 
 as 
\begin{align}
      &\L_{\rm{OISD}} V(\sigma)=V(\sigma)\L_{\rm{decoupled}},\notag\\
&\L_{\rm{decoupled}} := \tilde\L_{\rm{sp}} + \tilde\L_{\rm{ph}},\quad 
\tilde\L_{\rm{sp}}:=- i\mu\K_{\rm{sp}}
 +  \lambda \gamma\delta\D_{\rm{sp}},\quad  
 \tilde\L_{\rm{ph}}:=-i\omega\K_{\rm{ph}}  + \gamma\D_{\rm{ph}}. 
\label{SPLIT} 
\end{align}
The transformation  $V(\sigma)$  used above 
 is an invertible  CPTP map  on  $\T(\H)$. 
\end{prop}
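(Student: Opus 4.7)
The plan is to verify the similarity identity $\L_{\text{OISD}} V(\sigma) = V(\sigma)\,\L_{\text{decoupled}}$ by pushing $\L_{\text{OISD}}$ successively through the three factors of $V(\sigma)=W(\zeta_1)\,e^{\sigma\D_{\text{ph}}}\,W(\zeta_2)$, using the explicit conjugation rules (\ref{vecCR1}) and (\ref{vecCR2}), and then to read off the CPTP and invertibility properties of $V(\sigma)$ from those of its three factors.

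The first half of the calculation is a three-step algebraic shuffle. Commuting $\L_{\text{OISD}}$ past $W(\zeta_1)$ amounts to applying $\mathrm{Ad}(W(-\zeta_1))$ to each of $\K_{\text{sp}}$, $\K_{\text{ph}}$, $\K^{\text{int}}$ and $\D_{\text{ph}}$; by (\ref{vecCR1}) this produces a linear combination of the eight basic generators $\K_{\text{sp}},\K_{\text{ph}},\K^{\text{int}}_{\pm\mp},\D_{\text{sp}},\D^{\text{int}}_{\pm\mp},\D_{\text{ph}}$ whose coefficients are polynomial in $\zeta_1,\bar\zeta_1$. Commuting next past $e^{\sigma\D_{\text{ph}}}$ leaves $\K_{\text{sp}},\K_{\text{ph}},\D_{\text{sp}},\D_{\text{ph}}$ unchanged (by (\ref{KD0})) and acts as the hyperbolic rotation (\ref{vecCR2}) on each pair $(\K^{\text{int}}_{\pm\mp},\D^{\text{int}}_{\pm\mp})$. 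Commuting finally past $W(\zeta_2)$ triggers one more application of (\ref{vecCR1}). The outcome is an expression of the schematic form
\begin{equation*}
-i\mu\K_{\text{sp}}-i\omega\K_{\text{ph}}+\gamma\D_{\text{ph}}+C\,\D_{\text{sp}}+A_1\K^{\text{int}}_{-+}+\bar A_1\K^{\text{int}}_{+-}+A_2\D^{\text{int}}_{-+}+\bar A_2\D^{\text{int}}_{+-},
\end{equation*}
where $C,A_1,A_2$ are explicit functions of $\zeta_1,\zeta_2,\sigma,\gamma,\lambda$ and $\omega-\mu$. The four cancellation requirements $A_1=A_2=0$ constitute a linear system in the real and imaginary parts of $\zeta_1,\zeta_2$; its unique solution (for $\sinh\sigma\neq 0$) is the pair (\ref{V-constans-function}), and substituting back evaluates the residual coefficient to $C=\lambda\gamma\delta$, matching the definition of $\tilde\L_{\text{sp}}$.

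For the CPTP claim, I note that $\eta l_- a^\dagger-\bar\eta l_+ a$ is anti-self-adjoint for every $\eta\in\C$ (using $l_-^{\,*}=l_+$ from (\ref{com1}) and $a^*=a^\dagger$), so $W(\eta)$ is formally the adjoint action of a one-parameter unitary group and is therefore CPTP with bounded inverse $W(-\eta)$. The middle factor $e^{\sigma\D_{\text{ph}}}$ is CPTP by the explicit Kraus form (\ref{semigroup2}), and has dense range by the same argument used in \S\ref{sec:HO} for $e^{\tau\D_{a^\dagger\circ a}}$. Hence $V(\sigma)$ is a composition of CPTP maps, thus itself CPTP, and the formal product $W(-\zeta_2)\,e^{-\sigma\D_{\text{ph}}}\,W(-\zeta_1)$ is densely defined and provides the required two-sided inverse.

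The main obstacle is the coefficient bookkeeping in the second paragraph: one must keep the eight basic generators strictly separated through three conjugations so that the four cancellation equations $A_1=A_2=0$ are visibly of the right rank rather than overdetermined, and so that the surviving coefficient $C$ collapses to the clean form $\lambda\gamma\delta$. Once the structure of (\ref{vecCR1})--(\ref{vecCR2}) is exploited in the right order (the $e^{\sigma\D_{\text{ph}}}$ rotation first separates the two interaction directions, and the two $W$-conjugations then supply the two complex parameters needed to kill them), the matching is mechanical and (\ref{V-constans-function}) is forced.
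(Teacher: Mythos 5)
Your proposal is correct and takes essentially the same route as the paper: the intertwining identity is verified by conjugating through the three factors of $V(\sigma)$ via (\ref{vecCR1}) and (\ref{vecCR2}) (the paper's appendix organizes the identical computation as the vanishing of a difference of two one-sided conjugations, with the same cancellation conditions forcing (\ref{V-constans-function}) and the residual coefficient $\lambda\gamma\delta$), and the CPTP/invertibility claims are obtained factorwise exactly as in the text.
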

Note that 
 $\tilde\L_{\text{ph}}$ is   
 the Liouvillian (\ref{LIOU-ho}) acting in  $\T(\F)$ 
(imbedded into $\T(\H)$), and 
$\tilde\L_{\text{sp}}$ is  the Liouvillian
 (\ref{LIOU-spin}) acting in  $\T(\G)$ 
(imbedded into $\T(\H)$) 
with values  $\alpha_- 
= \lambda\gamma\delta (J+1)$, $\alpha_+ =  \lambda\gamma\delta J$.
The  formula  (\ref{SPLIT}) tells us
 that the similarity transformation by $V(\sigma)$ 
erases the Jaynes--Cummings interaction $-i\lambda \K^{\text{int}}$ in the original  $\L_{\text{OISD}}$  but generates  the  new 
dissipation term $\lambda \gamma\delta\D_{\text{sp}}$ in 
the new decoupled Liouvillian $\L_{\rm{decoupled}}$.

 To complete the proof of Proposition~\ref{prop:decom}, it remains 
 to show that 
 $V(\sigma)$ is an invertible CPTP map.
Note that $W(\zeta_1)$ and $W(\zeta_2)$
are CPTP maps with  
 their bounded inverses $W(-\zeta_1)$ and $W(-\zeta_2)$, respectively. 

Furthermore, note that  $e^{\sigma\D_{\text{ph}}}$ is  
a CPTP map,  and that it has its   dense range   
 and (unbounded) inverse 
because of the spectrum  of $\D_{\text{ph}}$
given in (\ref{JD0eigen}).   Thus  $V(\sigma)$ is an invertible CPTP 
map on $\T(\H)$ with its dense range.

\bigskip 

In the following, we discuss  the asymptotic behavior of 
 the dynamical semigroup  $\{ e^{t \L_{\text{OISD}}} \}_{t\geqslant 0}$.
To this end, let us rewrite 
the  decomposition formula  (\ref{SPLIT})
 in tensor-product format  as
\begin{equation}
      \L_{\text{OISD}} V(\sigma)=V(\sigma)\L_\text{decoupled},\quad \L_\text{decoupled} := \tilde\L_{\text{sp}}\otimes \bold{1} + \bold{1}\otimes \tilde\L_{\text{ph}} \,.
\label{LIOU-Trans-tensor}
\end{equation}
The solution for  the master equation (\ref{master}) of the OISD model 
is written formally as 
\begin{equation}
  \rho(t) \equiv e^{t\L_{\text{OISD}}} \big(\rho(0)\big) 
 = V(\sigma)\Big( e^{t\tilde\L_{\text{sp}}} 
\otimes e^{t\tilde\L_{\text{ph}}}\Big)V(\sigma)^{-1} \big(\rho(0)\big)   \,. 
\label{PRODUCT-evolution}
\end{equation}
By applying an argument similar to that leading to (\ref{toGibbs})   
at the end of \S \ref{sec:HO} to the 
second tensor component of 
 $\rho(t)$ in the right-hand side of 
(\ref{PRODUCT-evolution}), we obtain  the convergence  
\begin{equation}
         \Big(\bold{1}\otimes e^{t\tilde\L_{\text{ph}}}\Big)V(\sigma)^{-1}
     \big(\rho(0)\big) \rightarrow \rho_\ast\otimes\rho_{\text{ph}, \text{G}}    \label{toastGIB}
\end{equation}
\bigskip
in norm as $ t \to \infty$, where $\rho_\ast $ is some uniquely determined 
 density  matrix in $\T(\G)$ 
and $\rho_{\text{ph}, \text{G}} \in \T(\F)$ is  the Gibbs state
 as  in (\ref{Gibbs0}). 

Let us modify the decoupled Liouvillian  $\L_{\rm{decoupled}} = - i\mu\K_{\rm{sp}} +  \lambda \gamma\delta\D_{\rm{sp}}
-i\omega\K_{\rm{ph}}  + \gamma\D_{\rm{ph}}$ of  (\ref{SPLIT})
by setting  $\omega=\mu$ and deleting  the term $\gamma \D_{\rm{ph}}$ as
\begin{equation}
         \check\L_{\rm{syn.dec.}} :=
-i\mu(\K_{\text{sp}}+\K_{\text{ph}}) + \lambda\gamma\delta
\D_{\text{sp}}.
 \label{SAME}
\end{equation}
Let us call it  {\it{the synchronized decoupled Liouvillian}}.
Note that it differs from $\L_{\rm{decoupled}}$  with  
$\omega = \mu $ (and  $\gamma=0$).   
We show the following proposition. 

\begin{prop}
\label{prop:AYSMP}
Let $\rho(0)$ be any density matrix of  $\T(\H)$, 
 and let $\rho(t)$ $(t \geqslant 0)$ denote its time evolution 
under the OISD Liouvillian $\L_{\rm{OISD}}$.
Define the density matrix   
\begin{align}
\check\rho(0) :=
 V(\sigma)\big(\rho_\ast\otimes\rho_{\rm{ph}, \rm{G}} \big)\in \T(\F)
\label{newINITIAL}
\end{align}
and consider its time evolution  
under  the  synchronized decoupled Liouvillian
 $\check\L_{\rm{syn.dec.}}$ given as  
\begin{align}
  \check\rho(t):= 
 e^{t\check\L_{\rm{syn.dec.}}
} \big(\check\rho(0) \big).
\label{SYNCRONIZED}
\end{align}
Then 
\begin{align}
\lim_{t\to \infty} \| \rho(t) - \check\rho(t) \|_1=0.
\label{ASYM}
\end{align}
Namely, the time evolution of any state $\rho(0)$ under the OISD  Liouvillian 
 is asymptotically equivalent  
to the time evolution  of the state 
$\check\rho(0)$ under the synchronized decoupled Liouvillian.    
If  $\lambda\gamma\delta\ne 0$, then 
\begin{align}
  \textrm{s-}\!\lim_{t\to \infty}  \rho(t)=0.
\label{COMPLETE-DISSIPATED}
\end{align}
\end{prop}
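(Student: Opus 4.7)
My overall strategy is to combine the explicit product representation of $e^{t\L_{\text{OISD}}}$ from Proposition~\ref{prop:decom} with the two invariances of the Gibbs state $\rho_{\text{ph},\text{G}}$ --- namely $e^{t\tilde\L_{\text{ph}}}\rho_{\text{ph},\text{G}}=\rho_{\text{ph},\text{G}}$ from (\ref{INVGibbs}) and $e^{-it\mu\K_N}\rho_{\text{ph},\text{G}}=\rho_{\text{ph},\text{G}}$ from (\ref{Gibbsinv}) --- together with the key algebraic fact that the CPTP superoperator $V(\sigma)$ commutes exactly with the synchronized decoupled generator $\check\L_{\text{syn.dec.}}$.

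First I would use (\ref{PRODUCT-evolution}) to factor
\begin{equation*}
\rho(t)=V(\sigma)\bigl(e^{t\tilde\L_{\text{sp}}}\otimes\bold{1}\bigr)\bigl(\bold{1}\otimes e^{t\tilde\L_{\text{ph}}}\bigr)V(\sigma)^{-1}\rho(0).
\end{equation*}
By (\ref{toastGIB}) the inner bracket converges in trace norm to $\rho_\ast\otimes\rho_{\text{ph},\text{G}}$, and since $V(\sigma)\bigl(e^{t\tilde\L_{\text{sp}}}\otimes\bold{1}\bigr)$ is CPTP and hence a trace-norm contraction, I obtain $\|\rho(t)-V(\sigma)\bigl((e^{t\tilde\L_{\text{sp}}}\rho_\ast)\otimes\rho_{\text{ph},\text{G}}\bigr)\|_1\to0$. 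I would then identify this target with $\check\rho(t)$: by (\ref{Gibbsinv}),
\begin{equation*}
(e^{t\tilde\L_{\text{sp}}}\rho_\ast)\otimes\rho_{\text{ph},\text{G}}=\bigl(e^{t\tilde\L_{\text{sp}}}\otimes e^{-it\mu\K_N}\bigr)(\rho_\ast\otimes\rho_{\text{ph},\text{G}})=e^{t\check\L_{\text{syn.dec.}}}(\rho_\ast\otimes\rho_{\text{ph},\text{G}}),
\end{equation*}
because the two summands of $\check\L_{\text{syn.dec.}}=\tilde\L_{\text{sp}}\otimes\bold{1}+\bold{1}\otimes(-i\mu\K_N)$ commute.

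Next I would verify $[V(\sigma),\check\L_{\text{syn.dec.}}]=0$: summing the first and fourth rows of (\ref{vecCR1}) makes the Jaynes--Cummings contributions cancel so that $W(\eta)$ commutes with $\K_{\text{sp}}+\K_{\text{ph}}$, while (\ref{KD0}) gives $[W(\eta),\D_{\text{sp}}]=0$ and $[e^{\sigma\D_{\text{ph}}},\K_{\text{sp}}]=[e^{\sigma\D_{\text{ph}}},\K_{\text{ph}}]=[e^{\sigma\D_{\text{ph}}},\D_{\text{sp}}]=0$. Thus every factor of $V(\sigma)=W(\zeta_1)e^{\sigma\D_{\text{ph}}}W(\zeta_2)$ commutes with $\check\L_{\text{syn.dec.}}$, so $V(\sigma)e^{t\check\L_{\text{syn.dec.}}}(\rho_\ast\otimes\rho_{\text{ph},\text{G}})=e^{t\check\L_{\text{syn.dec.}}}V(\sigma)(\rho_\ast\otimes\rho_{\text{ph},\text{G}})=\check\rho(t)$, which together with the first estimate establishes (\ref{ASYM}).

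For (\ref{COMPLETE-DISSIPATED}), observe that $\tilde\L_{\text{sp}}$ in (\ref{SPLIT}) carries dissipation parameters $\alpha_-=\lambda\gamma\delta(J+1)$ and $\alpha_+=\lambda\gamma\delta J$, not both zero precisely when $\lambda\gamma\delta\ne0$. Applying (\ref{slim}) to $\rho_\ast$ then gives $e^{t\tilde\L_{\text{sp}}}\rho_\ast\to0$ strongly; tensoring with the fixed $\rho_{\text{ph},\text{G}}$ and applying the bounded CPTP map $V(\sigma)$ propagates this strong limit to $\check\rho(t)\to0$, and combining with the trace-norm convergence of (\ref{ASYM}) yields $\rho(t)\to0$ strongly. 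The delicate step throughout is the commutation $[V(\sigma),\check\L_{\text{syn.dec.}}]=0$; the need to synchronize the frequency ($\omega\mapsto\mu$) and discard the term $\gamma\D_{\text{ph}}$ is precisely what makes the comparison possible with $\check\L_{\text{syn.dec.}}$ rather than with $\L_{\text{decoupled}}$ of Proposition~\ref{prop:decom}, since the phase $e^{-it(\omega-\mu)}$ in (\ref{vecCR3}) and the $\D^{\text{int}}_{\pm\mp}$ terms in (\ref{vecCR2}) would otherwise spoil the cancellation of Jaynes--Cummings contributions seen in (\ref{vecCR1}).
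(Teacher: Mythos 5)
Your proposal is correct and follows essentially the same route as the paper's proof: the factorization (\ref{PRODUCT-evolution}), the convergence (\ref{toastGIB}), the trace-norm contractivity of the CPTP map $V(\sigma)\bigl(e^{t\tilde\L_{\text{sp}}}\otimes\bold{1}\bigr)$, the two Gibbs-state invariances, the commutation of the generators of $V(\sigma)$ with $\D_{\text{sp}}$ and $\K_{\text{sp}}+\K_{\text{ph}}$, and the appeal to (\ref{slim}) for complete dissipation. The only cosmetic difference is ordering: the paper first establishes the identity $\check\rho(t)=e^{t\L_{\text{OISD}}}(\check\rho(0))$ and then estimates, while you estimate first and then identify the limit object with $\check\rho(t)$.
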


\begin{proof}

We first note the following remarkable fact.
 For  $\check\rho(0)$, which has the  specific form as defined in 
 (\ref{newINITIAL}), 
its time evolution under  $\check\L_{\rm{syn.dec.}}$ 
is identical to the  time evolution under  
the genuine OISD Liovillian $\L_{\text{OISD}}$, 
 namely for all  $t\geqslant 0$, we have
\begin{align}
\check\rho(t)\equiv 
 e^{t\check\L_{\rm{syn.dec.}}} \big(\check\rho(0) \big)
=  e^{t\L_{\text{OISD}}}\left(\check\rho(0)  \right).
\label{TWO-LIUS-ONAJI}
\end{align}
We show this identity  in the following. We compute  the right-hand side 
 of (\ref{TWO-LIUS-ONAJI}) as
\begin{align}
 e^{t\L_{\text{OISD}}}\left(\check\rho(0)  \right)&=
e^{t\L_{\text{OISD}}}\left(V(\sigma)\big(\rho_\ast\otimes\rho_{\text{ph}, \text{G}} \big)\right)\notag\\ 
   &=V(\sigma) e^{t\L_\text{decoupled}} \big(\rho_\ast\otimes\rho_{\text{ph}, \text{G}} \big) \label{KOUKANV} \\
 &= V(\sigma)\Big( e^{t\tilde\L_{\text{sp}}}\otimes \bold{1}\Big) 
\Big(\bold{1}\otimes e^{t\tilde\L_{\text{ph}}}\Big) \big(\rho_\ast\otimes\rho_{\text{ph}, \text{G}} \big)  \notag    
\\ & = V(\sigma)\Big( e^{t\tilde\L_{\text{sp}}}\otimes \bold{1}\Big)
 \big(\rho_\ast\otimes\rho_{\text{ph}, \text{G}} \big).
\label{Vatamadashi}
\end{align}
For the derivation of (\ref{KOUKANV})
the relation (\ref{LIOU-Trans-tensor}) is used, and 
for (\ref{Vatamadashi})
the invariance  of the Gibbs state $\rho_{\text{ph}, \text{G}}$
 under $e^{t\tilde\L_{\text{ph}}}$ given as (\ref{INVGibbs}) is used.
Substituting the  identity   $e^{-it\mu\K_{\text{ph}}} 
(\rho_{\text{ph}, \text{G}})=\rho_{\text{ph}, \text{G}}$
(which  obviously holds as in   (\ref{Gibbsinv})) into  (\ref{Vatamadashi}),  
we obtain  
\begin{align}
   e^{t\L_{\text{OISD}}}\left(\check\rho(0)  \right)&=
 V(\sigma)\Big( e^{t\tilde\L_{\text{sp}}}\otimes \bold{1}\Big) 
\Big(\bold{1}\otimes e^{-it\mu\K_{\text{ph}}}\Big) \big(\rho_\ast\otimes\rho_{\text{ph}, \text{G}} \big)     \notag
\\ & = V(\sigma) e^{t(-i\mu\K_{\text{sp}}+ \lambda\gamma\delta\D_{\text{sp}} 
-i\mu\K_{\text{ph}})}  
\big(\rho_\ast\otimes\rho_{\text{ph}, \text{G}} \big)      \notag
\\ & = V(\sigma) e^{t(-i\mu(\K_{\text{sp}}+\K_{\text{ph}} )+ \lambda\gamma\delta\D_{\text{sp}}) }  
\big(\rho_\ast\otimes\rho_{\text{ph}, \text{G}} \big)     
 \label{VmuomePROD}
\\ & =  e^{t(-i\mu(\K_{\text{sp}}+\K_{\text{ph}}) + \lambda\gamma\delta
\D_{\text{sp}})} 
 V(\sigma)\big(\rho_\ast\otimes\rho_{\text{ph}, \text{G}} \big) 
 \label{muome}
\\ & 
= e^{t\check\L_{\rm{syn.dec.}}[\mu=\omega]}
 V(\sigma)  \big(\rho_\ast\otimes\rho_{\text{ph}, \text{G}} \big)  
=
 e^{t\check\L_{\rm{syn.dec.}}}
\left(\check\rho(0)  \right),
\notag
\end{align}
 which gives  our desired (\ref{TWO-LIUS-ONAJI}).
To show   (\ref{muome}),
  we note   the  commutativity relations  
$
[\D_{\text{sp}}, \K^{\text{int}}_{-+}]
 =
[\D_{\text{sp}}, \K^{\text{int}}_{+-}]
   = [\D_{\text{sp}}, \D_{\text{ph}}] =  
[\K_{\text{sp}} + \K_{\text{ph}}, \D_{\text{ph}}] =
[\K_{\text{sp}} + \K_{\text{ph}}, \K_{-+}^{\text{int}}]
=
[\K_{\text{sp}} + \K_{\text{ph}}, \K_{+-}^{\text{int}}]
 = 0$,
 which  follow from (\ref{KDpm}) and (\ref{KD0}),  and then 
  recall  the definition of $V(\sigma)$
 given in 
 (\ref{DEFWeta}) and (\ref{V-DEF}). All the   ingredients $\K^{\text{int}}_{-+}$, 
 $\K^{\text{int}}_{+-}$, and $\D_{\text{ph}}$ 
 generating $V(\sigma)$ 
commute with each of  
$\D_{\text{sp}}$ and  $\K_{\text{sp}} + \K_{\text{ph}}$.

Noting (\ref{PRODUCT-evolution}) and (\ref{Vatamadashi}),
we have the estimate 
\begin{align}
& \| \rho(t) - \check\rho(t) \|_1  \notag
\\ 
 = & \left\| V(\sigma)\Big( e^{t\tilde\L_{\text{sp}}}\otimes 
\bold{1}\Big)\Big(\bold{1}\otimes e^{t\tilde\L_{\text{ph}}}\Big)
V(\sigma)^{-1} \big(\rho(0)\big) - V(\sigma)\Big( e^{t\tilde\L_{\text{sp}}}
\otimes \bold{1}\Big) 
 \big(\rho_\ast\otimes\rho_{\text{ph}, \text{G}} \big) \right\|_1 \notag
\\
 = & \left\| V(\sigma)\Big( e^{t\tilde\L_{\text{sp}}}\otimes 
\bold{1}\Big)\Big[\Big(\bold{1}\otimes e^{t\tilde\L_{\text{ph}}}\Big)
V(\sigma)^{-1} \big(\rho(0)\big) - \rho_\ast\otimes\rho_{\text{ph}, \text{G}} \Big] \right\|_1 \notag
\\
 \leqslant & \left \| \Big[\Big(\bold{1}\otimes e^{t\tilde\L_{\text{ph}}}\Big)
V(\sigma)^{-1} \big(\rho(0)\big) - \rho_\ast\otimes\rho_{\text{ph}, \text{G}} \Big] 
\right \|_1. 
\label{EST}
\end{align}
The last inequality is a consequence of the fact that CPTP maps act 
 in a contractive manner  
on self-adjoint elements of $\T(\H)$.    
By combining (\ref{toastGIB})
 with  (\ref{EST}), we have (\ref{ASYM}).

We show (\ref{COMPLETE-DISSIPATED}).
By (\ref{TWO-LIUS-ONAJI}) and (\ref{Vatamadashi}), 
we have 
\begin{align}
\check\rho(t)
 = V(\sigma)
 \left(e^{t\tilde\L_{\text{sp}}}(\rho_\ast)
\otimes\rho_{\text{ph}, \text{G}} \right).
\label{CHECKRHO-dec}
\end{align}
If  $\lambda\gamma\delta\ne 0$, equivalently, 
 the dissipation part of 
$\tilde\L_{\rm{sp}}=- i\mu\K_{\rm{sp}}
 +  \lambda \gamma\delta\D_{\rm{sp}}
 $ is non-zero, 
then 
according to the argument in the last part of   \S\ref{sec:SPIN}, we have  
\begin{align*}
    \textrm{s-}\!\lim_{t\to \infty}  e^{t\tilde\L_{\text{sp}}}(\rho_\ast) =0, 
\end{align*}
therefore   
\begin{align}
    \textrm{s-}\!\lim_{t\to \infty}  \check\rho(t) = 0
\label{check-DISSIPATED}
\end{align}
holds.   
Because the trace  norm convergence implies the strong  convergence, we obtain 
  $\textrm{s-}\!\lim_{t\to \infty} \rho(t)=0$  
from (\ref{ASYM}) and  (\ref{check-DISSIPATED}).  
\end{proof}

\begin{rem} 
\label{rem:GCZ}    
{\rm{
From the above argument, we see that 
the OISD model demonstrates  a typical  mechanism of 
synchronization, namely, the separation of multiple dissipative time scales 
 as described  in \cite{GCZ}.   
The essential point lies in  the 
different decay rates between the two systems:
 the dressed photon decays  into 
the Gibbs state with the positive decay rate  $\gamma$, 
  while the dressed spin does not decay into a certain state in a finite 
time scale.}}    
\end{rem}

\begin{rem} 
\label{rem:BULL}    
{\rm{
We   recall   \cite{BULL}, which derives the 
master equation of the reduced density matrix 
for the spin sector of an open Dicke model
 based on approximate dissipation of the harmonic oscillator 
into the Gibbs state. Its (reduced) Liouvillian has dissipative terms 
 having a coefficient corresponding to $\lambda\gamma\delta$ of our case. }}     
\end{rem}

\begin{rem} 
\label{rem:}
{\rm{
Note that  
  $V(\sigma)$ for any  $\sigma > 0$  
 works  for the same transformation identity  (\ref{SPLIT}).   
This freedom  of choice of $\sigma$  reflects the fact that $  - i\mu\K_{\text{sp}}  +  \lambda \gamma\delta\D_{\text{sp}} -i\omega\K_{\text{ph}}   + \gamma\D_{\text{ph}} $ 
in the right-hand side of (\ref{SPLIT}) commutes with $\D_{\text{sp}}$ and 
$\D_{\text{ph}}$.   
From (\ref{V-constans-function}), if $\gamma$ (i.e., 
  the strength of the dissipation of the harmonic oscillator) is large,  
then the strength of the  dissipation of the dressed spin 
in the decoupled Liouvillian (i.e., $\lambda \gamma\delta$) 
  is suppressed.}}   
\end{rem}

\bigskip
In the rest of this section, we suggest a straightforward  
generalization of our results.  
So far, we have discussed the OISD  
model whose dissipation  is assigned to
the harmonic  oscillator only 
as  in  (\ref{LIOU-OISD}).
As we have anticipated in \S\ref{sec:INT}, 
 we can similarly treat a more general  OISD model, where 
 each of   the harmonic oscillator 
and the 
infinite-component spin has its dissipation term.
Namely, let 
\begin{equation}
   {\L_{\text{OISD}}}(\gamma, \bar{\gamma}) :=  -i\mu\K_{\text{sp}} 
   -i\lambda \K^{\text{int}}
-i\omega\K_{\text{ph}}
+ \gamma\D_{\text{ph}} + \bar{\gamma}\D_{\text{sp}} 
\label{LIOU-OISD-GENERAL}
\end{equation}
with positive constants $\omega, \mu, \gamma, \bar{\gamma} $
and real $\lambda$.   
Correspondingly, we define   the  decoupled Liouvillian as 
\begin{equation}
   {\L}_\text{decoupled}(\gamma, \bar{\gamma}) :=  -i\mu\K_{\text{sp}} 
-i\omega\K_{\text{ph}}
+ \gamma\D_{\text{ph}} + \bar{\gamma} \D_{\text{sp}}.  
\label{DECOUPLED-LIOU-GENERAL}
\end{equation}
Because  $\D_{\text{sp}}$ commutes with any of  
$\K_{\text{sp}}$, $\K^{\text{int}}$, $\K_{\text{ph}}$, 
$\D_{\text{sp}}$, $\D^{\text{int}}_{-+}$, $\D^{\text{int}}_{+-}$, and
$\D_{\text{ph}}$ by  (\ref{KD0}), by  using   the same $V(\sigma)$ as in (\ref{V-DEF})
we obtain 
\begin{equation}
{\L_{\text{OISD}}}(\gamma, \bar{\gamma}) V(\sigma)
    =  V(\sigma) {\L}_\text{decoupled}(\gamma, \lambda \gamma\delta+\bar{\gamma}) \,.
\label{GEN-V-TRANS-SPLIT} 
\end{equation}
From (\ref{GEN-V-TRANS-SPLIT}),
 we see that 
the Jaynes--Cummings interaction in the original Liouvillian is erased,  
whereas extra  dissipation is added  to  the infinite-component spin 
but not to  the harmonic oscillator.   

We can discuss
the asymptotic behavior of 
 the dynamical semigroup generated by 
${\L_{\text{OISD}}}(\gamma, \bar{\gamma})$.
Because of  (\ref{GEN-V-TRANS-SPLIT}),
  we can easily show a similar statement to that given in
Proposition~\ref{prop:AYSMP}.

So far, we have always assumed   $\gamma > 0$.     
Finally, let us briefly discuss   the case of $\gamma = 0$ 
(and $\bar{\gamma} \geqslant 0$).      
By recalling  the  argument that yields   (\ref{GEN-V-TRANS-SPLIT}), 
 for any   $\omega \ne \mu$
we see  
\begin{equation}
{\L_{\text{OISD}}}(0, \bar{\gamma})V(\sigma) 
    = V(\sigma) {\L}_\text{decoupled}(0, \bar{\gamma}) 
\label{OPPO-V-TRANS-SPLIT} 
\end{equation} 
by taking  $ V(\sigma) = W(\lambda/(\mu -\omega)) e^{\sigma\D_{\textrm{ph}}}$.    Thus, by applying  the similarity transformation generated by this $V(\sigma)$
 to ${\L_{\text{OISD}}}(0, \bar{\gamma})$,     
 the  interaction term is erased as before, whereas    
 no dissipation  appears in the harmonic oscillator.  
Hence we do not  expect a similar statement to that 
in Proposition~\ref{prop:AYSMP}.
If  the frequencies are identical, that is,  $\omega = \mu$, then 
 the above  decoupling method   is no longer valid. However, 
 assuming   $\omega = \mu$  from the outset does not seem to be 
 a natural setup.

\bigskip

%%%%%%%%%%%%%%%%%%%%%%%%%%%%%%%%%%%%%%%%%%%%%%%%%%%%
\bigskip

\noindent
\textbf{\large Acknowledgments}

\noindent
This work was supported by JSPS KAKENHI Grant Number JP17K05272.  
The authors are grateful to Professor Valentin A. 
Zagrebnov for useful discussions and  suggestions.  
This work was done  as part of the project of the Rigaku Yugo 1 Group of Kanazawa University.  

%%%%%%%%%%%%%%%%%%  APPENDIX  %%%%%%%%%%%%%%%%%%%
\appendix\section{Appendix}
%%%%%%%%%%%%%%%%%%%%%%%%%%%%%%%%%%%%%%%%%%%%%%%%
In the text of the note, there are some identities whose derivations are tedious.  
We show here the details of such derivations to facilitate the reader's 
understanding.  

\bigskip

\noindent{\bf Derivation of (\ref{KK})}: 
\begin{align*}
 [\K_A, \K_B](\rho) & = \K_A\K_B(\rho) - \K_B\K_A(\rho) 
 = [A, [B,\rho] ] -  [B, [A,\rho]]  
\\
  &  = A(B\rho - \rho B) - (B\rho - \rho B)A 
    - B(A\rho - \rho A) + (A\rho - \rho A)B 
\\
 & = AB\rho + \rho BA  - BA\rho - \rho AB
   = [A, B] \rho - \rho [A, B]
\\
  & =  \K_{[A, B]} (\rho) \,,
\end{align*}
\begin{align*}
   [\K_A, & \, \D_{B \circ C}](\rho)  = 
    \K_A\D_{B \circ C}(\rho) - \D_{B \circ C}\K_A(\rho)
\\
  & = A(2B\rho C - CB\rho - \rho CB) - (2B\rho C - CB\rho - \rho CB)A
\\
  & - 2B(A\rho - \rho A) C + CB(A\rho - \rho A) + (A\rho - \rho A)CB
\\
  & = 2[A, B]\rho C + 2B\rho [A, C] - ACB\rho  - A\rho CB + CB\rho A 
\\
  & + \rho CBA + CBA\rho  - CB\rho A + A\rho CB - \rho ACB
\\
  & = 2[A, B]\rho C + 2B\rho [A, C]  - (ACB- CBA)\rho  -  \rho(ACB- CBA)
\\
  & = 2[A, B]\rho C + 2B\rho [A, C]  - ( [A, C]B + C[A, B])\rho  -  
      \rho([A, C]B + C[A, B])
\\
  & = \D_{[A, B] \circ C}(\rho)  + \D_{B \circ [A, C]}(\rho) \,.
\end{align*}

\noindent{\bf Derivation of (\ref{DD})}   

Applying each side of the equality to $\rho \in \T(\H)$, we have
\begin{align}
& [\D_{A \circ B}, \D_{ C\circ D}](\rho) 
\notag \\
& = \D_{A \circ B}\big(\D_{ C\circ D}(\rho)\big)
\notag \\
& - \D_{ C\circ D}\big(\D_{A \circ B}(\rho)\big)
\notag \\
&= 2A\big( 2C\rho D - \{DC, \rho \} \big)B - \{BA, \big( 2C\rho D - \{DC, \rho \} \big) \} 
\notag \\
&-2C\big( 2A\rho B - \{BA, \rho \} \big)D + \{DC, \big( 2A\rho B - \{BA, \rho \} \big) \} 
\notag \\
&= 4AC\rho DB - \underline{2A\{DC, \rho \}B} - 2\{BA, C\rho D\} + \{BA,\{DC, \rho \} \} 
\notag \\
& -4CA\rho BD  +2C\{BA, \rho \}D + \underline{2\{DC, A\rho B\} } - \{DC,\{BA, \rho \} \} 
\notag \\
&= (\{A,C\} + [A, C])\rho (\{B, D\} - [B, D])  
\notag \\
& - (\{A,C\} - [A, C])\rho (\{B, D\} + [B, D])
\notag \\
& + \underline{ 2[DC, A]\rho B } + \underline{2 A\rho [B, DC]}  - 2[BA, C]\rho D - 2 C\rho [D,BA] 
\notag \\
& + BADC \rho +  \rho DCBA -  DCBA \rho -  \rho BADC
\notag \\
&=  2 [A, C]\rho \{B, D\} -2 \{A,C\}\rho [B, D] 
\notag \\
&  + 2[DC, A]\rho B + 2 A\rho [B, DC]  - 2[BA, C]\rho D - 2 C\rho [D,BA] 
\notag \\
&  + [[BA,DC], \rho]\,,
\label{left}
\end{align}
where the underlined terms in the fourth expression correspond to 
those in the fifth expression, 
and
\begin{align}
&\big( \D_{[A,C] \circ\{B,D\}} - \D_{\{A, C\} \circ [B, D]} 
\notag \\
& +  \D_{[DC,A] \circ B}   + \D_{A \circ [B,DC]} 
\notag \\
& - \D_{[BA,C] \circ D} - \D_{ C\circ[D,BA] } 
\notag \\
& + \K_{[BA, DC]}\big)(\rho)
\notag \\
& = 2[A,C]\rho \{B,D\} - \{\{B,D\} [A,C], \rho \} - 2\{A,C\}\rho[B, D] + \{[B, D]\{A, C\}, \rho \} 
\notag \\
& + 2 [DC,A]\rho B - \{B[DC,A] , \rho \}  + 2 A\rho[B,DC]  - \{[B,DC]A, \rho \} 
\notag \\
&  - 2 [BA,C]\rho D + \{D[BA,C] , \rho \}  - 2 C\rho[D,BA]  + \{[D,BA]C, \rho \} 
\notag \\
& + [[BA,DC], \rho] \,.
\label{right}
\end{align}
Then the difference of  both sides is
\begin{align}
& \mbox{(\ref{right})} \ - \ \mbox{(\ref{left})}
\notag \\
& = - \{\{B,D\} [A,C], \rho \}   + \{[B, D]\{A, C\}, \rho \} 
\notag \\
&  - \{B[DC,A] , \rho \}    - \{[B,DC]A, \rho \} 
\notag \\
&  + \{D[BA,C] , \rho \}   + \{[D,BA]C, \rho \} 
\notag \\
&=  \{ - \{B,D\} [A,C] + [B, D]\{A, C\} 
\notag \\
&  - B[DC,A]  - [B,DC]A  + D[BA,C] + [D,BA]C, \rho \} 
\notag \\
&= \{ 2BDCA -2DBAC  - BDCA + BADC - BDCA + DCBA 
\notag \\
& \quad + DBAC - DCBA + DBAC - BADC, \rho \}
\notag \\
&= \{     + BADC  + DCBA - DCBA  - BADC, \rho \} = 0 \,.
\end{align}
Thus we have shown  (\ref{DD}). 

\bigskip

\noindent{\bf Derivation of  the first equation in (\ref{3eq})}

By differentiating (\ref{SEMI-aadag}) with respect to $t$, we have
\begin{align*}
   \frac{d}{dt} S_t(\rho)  = &
   \sum_{n=1}^{\infty}\frac{(1-e^{-2t})^{n-1}}{(n-1)!}2e^{-2t}
   a^{\dagger n}e^{-taa^{\dagger}} \rho \, 
   e^{-taa^{\dagger}} a^n  
\\
  &+  \sum_{n=0}^{\infty}\frac{(1-e^{-2t})^n}{n!}
   a^{\dagger n} \big\{ -aa^{\dagger}, \, e^{-taa^{\dagger}} \rho \, 
   e^{-taa^{\dagger}} \big\} a^n  
\\
  = &  \sum_{n=1}^{\infty}\frac{(1-e^{-2t})^{n-1}}{(n-1)!}2e^{-2t}
   a^{\dagger n}e^{-taa^{\dagger}} \rho \, 
   e^{-taa^{\dagger}} a^n  
\\
   & +  \sum_{n=0}^{\infty}\frac{(1-e^{-2t})^n}{n!}2n
   a^{\dagger n} e^{-taa^{\dagger}} \rho \, 
   e^{-taa^{\dagger}} a^n 
\\
  & +  \sum_{n=0}^{\infty}\frac{(1-e^{-2t})^n}{n!}
   \big\{ -aa^{\dagger}, \, a^{\dagger n}e^{-taa^{\dagger}} \rho \, 
   e^{-taa^{\dagger}} a^n \big\} 
\\
  = &   \sum_{n=1}^{\infty}\frac{(1-e^{-2t})^{n-1}}{(n-1)!}
     \big(2e^{-2t} +2(1-e^{-2t})\big)
   a^{\dagger n} e^{-taa^{\dagger}} \rho \, 
   e^{-taa^{\dagger}} a^n 
\\
  & +  \sum_{n=0}^{\infty}\frac{(1-e^{-2t})^n}{n!}
   \big\{ -aa^{\dagger}, \, a^{\dagger n}e^{-taa^{\dagger}} \rho \, 
   e^{-taa^{\dagger}} a^n \big\} 
\\
  = & 2a^{\dagger } \Big( \sum_{n=0}^{\infty}\frac{(1-e^{-2t})^n}{n!}
    a^{\dagger n}e^{-taa^{\dagger}} \rho \, 
   e^{-taa^{\dagger}} a^n \Big) a
\\
  & -\Big\{ aa^{\dagger}, \, \sum_{n=0}^{\infty}\frac{(1-e^{-2t})^n}{n!}
    a^{\dagger n}e^{-taa^{\dagger}} \rho \, 
   e^{-taa^{\dagger}} a^n \Big\}
\\
 = & {\D_{a^{\dagger }\circ a} (S_t(\rho))} \,.
\end{align*}
  
\bigskip

\noindent{\bf Derivation of  the second equation in (\ref{3eq})}:   

By the use of (\ref{SEMI-aadag}) in the right-hand side, we have
\begin{align*}
   S_{t_1}\big(S_{t_2}(\rho)\big) = & 
   \sum_{n=0}^{\infty}  \frac{(1-e^{-2t_1})^n}{n!}a^{\dagger n} e^{-t_1aa^{\dagger}}
   \Big(\sum_{m=0}^{\infty}\frac{(1-e^{-2t_2})^m}{m!}
    a^{\dagger m} e^{-t_2aa^{\dagger}} \rho \, 
   e^{-t_2aa^{\dagger}} a^m \Big) e^{-t_1aa^{\dagger}} a^n  
\\
    = &  \sum_{n=0}^{\infty} \sum_{m=0}^{\infty} 
          \frac{(1-e^{-2t_1})^n(1-e^{-2t_2})^m}{n!m!}
    a^{\dagger (n+m)} e^{-t_1(aa^{\dagger}+m)} e^{-t_2aa^{\dagger}} \rho \, 
   e^{-t_2aa^{\dagger}}  e^{-t_1(aa^{\dagger} + m)} a^{n+m}  
\\
    = &  \sum_{n, m=0}^{\infty}  
          \frac{(1-e^{-2t_1})^n(1-e^{-2t_2})^me^{-2t_1m}}{n!m!}
    a^{\dagger (n+m)} e^{-(t_1+t_2)aa^{\dagger}} \rho \, 
    e^{-(t_1 + t_2)aa^{\dagger}} a^{n+m}  
\\
    = &  \sum_{k=0}^{\infty} \sum_{n=0}^{k} \frac{1}{k!} 
          \frac{k!(1-e^{-2t_1})^n\big(e^{-2t_1} - e^{-2(t_1 + t_2)}\big)^{k-n}}{n!(k-n)!}
    a^{\dagger k} e^{-(t_1+t_2)aa^{\dagger}} \rho \, 
    e^{-(t_1 + t_2)aa^{\dagger}} a^{k}  
\\
    = &  \sum_{k=0}^{\infty}   
          \frac{\big(1-  e^{-2(t_1 + t_2)}\big)^{k}}{k!}
    a^{\dagger k} e^{-(t_1+t_2)aa^{\dagger}} \rho \, 
    e^{-(t_1 + t_2)aa^{\dagger}} a^{k} 
     =   S_{t_1 + t_2}(\rho) \,. 
\end{align*}

\bigskip

\noindent{\bf Derivation of (\ref{trace=1})}: 

Making use of (\ref{hobase}) and (\ref{CPgutai}), we see the action of the 
left-hand side of (\ref{trace=1}) on $| m \rangle $.  
\begin{align*}
   \sum_{n=0}^{\infty} E_n(t)^{\dagger}E_n(t) |m\rangle
     & =  \sum_{n=0}^{\infty}\frac{(1-e^{-2t})^n}{n!} e^{-taa^{\dagger}} a^n 
   a^{\dagger n} e^{-taa^{\dagger}}   |k\rangle
\\
   & = \sum_{n=0}^{\infty}\frac{(n+m)!}{n! \, m!} (1-e^{-2t})^n
      e^{-2t(m+1)}   |m \rangle
\\
  & =  \frac{e^{-2t(m+1)}}{(1 - 1 + e^{-2t})^{m+1}}  \, |m \rangle  =   |m \rangle  \,.  
\end{align*} 
Here we have used the Maclaurin expansion formula
\begin{equation}
        \frac{1}{(1-x)^{m+1}} = \sum_{n=0}^{\infty}
   \frac{(n+m)!}{n! \, m!}x^n  \qquad (|x| < 1 )   
\end{equation}
in the fourth equality.   
Because $\{ \, | m \rangle \}_{ m \in \N \cup \{ 0 \}} $ forms C.O.N.S.\ in $\F$, 
 we obtain  (\ref{trace=1}).      

\bigskip

\noindent{\bf For the semigroup property of $e^{t\D_{a\circ a^{\dagger}}}$, we put}   
\begin{equation}
   \tilde{S}_t(\rho) = \sum_{n=0}^{\infty} \tilde{E_n}(t) \rho \tilde{E_n}(t)^{\dagger}  \,,
\end{equation}
where   
\begin{equation}
         \tilde{E_n}(t) = \frac{(e^{2t}-1)^{n/2}}{\sqrt{n!}} 
           a^n e^{-ta^{\dagger}a}  \,.  
\end{equation}

\bigskip

\noindent{\bf Derivation of 
$ \dfrac{d}{dt}\tilde{S}_t  = \D_{a \circ a^{\dagger }} \tilde{S}_t$}: 
\begin{align*}
\frac{d}{dt}\tilde{S}_t(\rho)  = &
   \sum_{n=1}^{\infty}\frac{(e^{2t}-1)^{n-1}}{(n-1)!}2e^{2t}
   a^ne^{-ta^{\dagger}a} \rho \, 
   e^{-ta^{\dagger}a} a^{\dagger n}  
\\
  &+  \sum_{n=0}^{\infty}\frac{(e^{2t}-1)^n}{n!}
   a^{n}\{ -a^{\dagger}a, e^{-ta^{\dagger}a} \rho \, 
   e^{-ta^{\dagger}a}\} a^{\dagger n}  
\\
  = &  \sum_{n=1}^{\infty}\frac{(e^{2t}-1)^{n-1}}{(n-1)!}2e^{2t}
   a^{n}e^{-ta^{\dagger}a} \rho \, 
   e^{-ta^{\dagger}a} a^{\dagger n}  
\\
   & -  \sum_{n=0}^{\infty}\frac{(e^{2t}-1)^n}{n!}2n
   a^{n} e^{-ta^{\dagger}a} \rho \, 
   e^{-ta^{\dagger}a} a^{\dagger n} 
\\
  & +  \sum_{n=0}^{\infty}\frac{(e^{2t}-1)^n}{n!}
   \{ -a^{\dagger}a, a^{n}e^{-ta^{\dagger}a} \rho \, 
   e^{-ta^{\dagger}a} a^{\dagger n} \} 
\\
  = &   \sum_{n=1}^{\infty}\frac{(e^{2t}-1)^{n-1}}{(n-1)!}
     \big(2e^{2t} - 2(e^{2t}-1)\big)
   a^{n} e^{-ta^{\dagger}a} \rho \, 
   e^{-ta^{\dagger}a} a^{\dagger n} 
\\
  & +  \sum_{n=0}^{\infty}\frac{(e^{2t}-1)^n}{n!}
   \{ -a^{\dagger}a, a^{n}e^{-ta^{\dagger}a} \rho \, 
   e^{-ta^{\dagger}a} a^{\dagger n} \} 
\\
  = & 2a \big( \sum_{n=0}^{\infty}\frac{(e^{2t}-1)^n}{n!}
    a^{n}e^{-ta^{\dagger}a} \rho \, 
   e^{-ta^{\dagger}a} a^{\dagger n} \big) a^{\dagger}
\\
  & -\{ a^{\dagger}a, \sum_{n=0}^{\infty}\frac{(e^{2t}-1)^n}{n!}
    a^{n}e^{-ta^{\dagger}a} \rho \, 
   e^{-ta^{\dagger}a} a^{\dagger n} \}
\\
 = & \D_{a \circ a^{\dagger }} \tilde{S}_t(\rho) \,.
\end{align*}
   
\bigskip

\noindent{\bf Derivation of $ \tilde{S}_{t_1}\tilde{S}_{t_2} = \tilde{S}_{t_1 + t_2}$}: 
\begin{align*}
   \tilde{S}_{t_1}\big(\tilde{S}_{t_2}(\rho)\big) = & 
   \sum_{n=0}^{\infty} \sum_{m=0}^{\infty} \frac{(e^{2t_1}-1)^n}{n!}
   \frac{(e^{2t_2}-1)^m}{m!}
   a^{n} e^{-t_1a^{\dagger}a} a^{m} e^{-t_2a^{\dagger}a} \rho \, 
   e^{-t_2a^{\dagger}a} a^{\dagger m}  e^{-t_1a^{\dagger}a} a^{\dagger n}  
\\
    = &  \sum_{n=0}^{\infty} \sum_{m=0}^{\infty} 
          \frac{(e^{2t_1}-1)^n(e^{2t_2}-1)^m}{n!m!}
    a^{n+m} e^{-t_1(a^{\dagger}a-m)} e^{-t_2a^{\dagger}a} \rho \, 
   e^{-t_2a^{\dagger}a}  e^{-t_1(a^{\dagger}a-m)} a^{\dagger (n+m)}  
\\
    = &  \sum_{n, m=0}^{\infty}  
          \frac{(e^{2t_1}-1)^n(e^{2t_2}-1)^me^{2t_1m}}{n!m!}
    a^{n+m} e^{-(t_1+t_2)a^{\dagger}a} \rho \, 
    e^{-(t_1 + t_2)a^{\dagger}a} a^{\dagger (n+m)}  
\\
    = &  \sum_{k=0}^{\infty} \sum_{n=0}^{k} \frac{1}{k!} 
          \frac{k!(e^{2t_1}-1)^n\big( e^{2(t_1 + t_2)} - e^{2t_1}\big)^{k-n}}{n!(k-n)!}
    a^{k} e^{-(t_1+t_2)a^{\dagger}a} \rho \, 
    e^{-(t_1 + t_2)a^{\dagger}a} a^{\dagger k}  
\\
    = &  \sum_{k=0}^{\infty}   
          \frac{\big( e^{2(t_1 + t_2)}-1 \big)^{k}}{k!}
    a^k e^{-(t_1+t_2)a^{\dagger}a} \rho \, 
    e^{-(t_1 + t_2)a^{\dagger}a} a^{\dagger k}  
     =  \tilde{S}_{t_1 + t_2}(\rho)  \,.
\end{align*} 

\bigskip

\noindent{\bf Derivation of $\sum_n\tilde{E_n}\tilde{E_n}^{\dagger} = \bf{1}$} by
applying it on each $| m \rangle \ ( m \in \N \cup\{ 0 \} ) $:
\begin{align*}
   \sum_{n=0}^{\infty} \tilde{E}_n(t)^{\dagger}\tilde{E}_n(t) |m\rangle
     & =  \sum_{n=0}^{\infty}\frac{(e^{2t}-1)^n}{n!} e^{-ta^{\dagger}a} a^{\dagger n} 
   a^n e^{-ta^{\dagger}a}   |m\rangle
\\
  & = \sum_{n=0}^{m}\frac{(e^{2t}-1)^n}{n!}\frac{m!}{(m-n)!} 
    e^{-2tm}  |m \rangle
\\
  & =   (1 + e^{2t}-1)^m e^{-2tm}   |m \rangle = |m \rangle \,.
\end{align*} 

\bigskip

\noindent{\bf Derivation of (\ref{SPLIT})}:  
We show 
\begin{equation}
    e^{\sigma\D_{\text{ph}}}W(\zeta_2)  \Bigl(  - i\mu\K_{\text{sp}}    
    +  \lambda \gamma\delta\D_{\text{sp}}  
    -i\omega\K_{\text{ph}}  + \gamma\D_{\text{ph}}  \Bigr) W( - \zeta_2)   
\end{equation}
\begin{equation}
  - W( - \zeta_1) \Bigl(  -i\mu\K_{\text{sp}}   -i\lambda \K^{\text{int}}  
-i\omega\K_{\text{ph}} 
+ \gamma\D_{\text{ph}}\Bigr) W(\zeta_1) e^{\sigma\D_{\text{ph}}} = 0 \,. 
\end{equation}
By the use of (\ref{vecCR1}), (\ref{vecCR2}), and part of (\ref{KD0}): 
\begin{equation}
 [\D_{\text{ph}}, \K_{\text{ph}}] = [\D_{\text{ph}}, \K_{\text{sp}}] 
      = [\D_{\text{ph}}, \D_{\text{sp}}] = 0 \,, 
\end{equation}
we have
\begin{align*}
   & e^{\sigma\D_{\text{ph}}} W(\zeta_2)  \Bigl[  - i\mu\K_{\text{sp}}    
    +  \lambda \gamma\delta\D_{\text{sp}}  
    -i\omega\K_{\text{ph}}  + \gamma\D_{\text{ph}}  \Bigr] W( - \zeta_2)   
 \\
    &\quad  - W( - \zeta_1) \Bigl[  -i\mu\K_{\text{sp}}   -i\lambda \K^{\text{int}}  
-i\omega\K_{\text{ph}} 
+ \gamma\D_{\text{ph}}\Bigr] W(\zeta_1) e^{\sigma\D_{\text{ph}}} 
\\
  & =  e^{\sigma\D_{\text{ph}}} \Bigl[  - i\mu \big(   \K_{\text{sp}} 
+ \zeta_2 \K^{\text{int}}_{-+} + \bar{\zeta_2}\K^{\text{int}}_{+-} \big)    
    +  \lambda \gamma\delta\D_{\text{sp}}  
    -i\omega \big( \K_{\text{ph}} - \zeta_2 \K^{\text{int}}_{-+} 
- \bar{\zeta_2}\K^{\text{int}}_{+-} \big)
\\
& \quad  + \gamma \big(\D_{\text{ph}} - \zeta_2 \D^{\text{int}}_{-+} 
  - \bar{\zeta_2}\D^{\text{int}}_{+-}  + |\zeta_2|^2\D_{\text{sp}} \big) \Bigr]
\\ 
& \quad -  \Bigl[  -i\mu\big( \K_{\text{sp}} - \zeta_1 \K^{\text{int}}_{-+} 
- \bar{\zeta_1}\K^{\text{int}}_{+-}  \big)
 -i\lambda (\K^{\text{int}}_{-+} + \K^{\text{int}}_{+-})  
-i\omega\big( \K_{\text{ph}} + \zeta_1 \K^{\text{int}}_{-+} 
+ \bar{\zeta_1}\K^{\text{int}}_{+-}  \big)
\\ 
  & \quad + \gamma \big( \D_{\text{ph}} + \zeta_1 \D^{\text{int}}_{-+} + \bar{\zeta_1}\D^{\text{int}}_{+-}  
   + |\zeta_1|^2\D_{\text{sp}} \big)
\Bigr] e^{\sigma\D_{\text{ph}}} 
\\
& =  e^{\sigma\D_{\text{ph}}} \Bigl[  - i\mu    \K_{\text{sp}} 
- i(\mu -\omega) \zeta_2 \K^{\text{int}}_{-+}  - i(\mu -\omega)\bar{\zeta_2}\K^{\text{int}}_{+-}     
    +  \lambda \gamma\delta\D_{\text{sp}}  
    -i\omega  \K_{\text{ph}} 
\\
&  \quad + \gamma \big(\D_{\text{ph}} - \zeta_2 \D^{\text{int}}_{-+} 
  - \bar{\zeta_2}\D^{\text{int}}_{+-}  + |\zeta_2|^2\D_{\text{sp}} \big) \Bigr]
\\ &  -  \Bigl[  -i\mu \K_{\text{sp}} + i\big((\mu-\omega) \zeta_1 -\lambda\big)\K^{\text{int}}_{-+} 
+ i\big((\mu-\omega) \bar{\zeta_1}-\lambda\big)\K^{\text{int}}_{+-}  
-i\omega \K_{\text{ph}} 
\\ 
  & \quad + \gamma \big( \D_{\text{ph}} + \zeta_1 \D^{\text{int}}_{-+} + \bar{\zeta_1}\D^{\text{int}}_{+-}  
   + |\zeta_1|^2\D_{\text{sp}} \big)
\Bigr] e^{\sigma\D_{\text{ph}}} 
\\
 & =  \Bigl[  - i\mu  \K_{\text{sp}} 
- i(\mu -\omega) \zeta_2 \big( \K^{\text{int}}_{-+} \cosh \sigma + \D^{\text{int}}_{-+} \sinh \sigma  \big)  
- i(\mu -\omega)\bar{\zeta_2} \big( \K^{\text{int}}_{+-}\cosh \sigma  - \D^{\text{int}}_{+-} \sinh \sigma  \big)
\\
 & \quad 
    +  \lambda \gamma\delta\D_{\text{sp}}    -i\omega  \K_{\text{ph}} 
\\
 & \quad + \gamma \big(\D_{\text{ph}} - \zeta_2 \big( \D^{\text{int}}_{-+}\cosh \sigma  + \K^{\text{int}}_{-+} \sinh \sigma \big) 
  - \bar{\zeta_2} \big( \D^{\text{int}}_{+-}\cosh \sigma  - \K^{\text{int}}_{+-} \sinh \sigma \big)
 + |\zeta_2|^2\D_{\text{sp}} \big) \Bigr]e^{\sigma\D_{\text{ph}}} 
\\ 
&  \quad -  \Bigl[  -i\mu \K_{\text{sp}} + i\big((\mu-\omega) \zeta_1 -\lambda\big)\K^{\text{int}}_{-+} 
+ i\big((\mu-\omega) \bar{\zeta_1}-\lambda\big)\K^{\text{int}}_{+-}  
-i\omega \K_{\text{ph}} 
\\ 
  & \quad + \gamma \big( \D_{\text{ph}} + \zeta_1 \D^{\text{int}}_{-+} + \bar{\zeta_1}\D^{\text{int}}_{+-}  
   + |\zeta_1|^2\D_{\text{sp}} \big)
\Bigr] e^{\sigma\D_{\text{ph}}} 
\end{align*}

\begin{align*}
& =  \Bigl[  
\Big(- i(\mu -\omega) \zeta_2  \cosh \sigma 
- \gamma  \zeta_2  \sinh \sigma  
- i\big((\mu-\omega) \zeta_1 -\lambda\big) \Big)\K^{\text{int}}_{-+} 
\\
& \quad +\Big(- i(\mu -\omega)\bar{\zeta_2} \cosh \sigma  
+ \gamma\bar{\zeta_2} \sinh \sigma 
- i\big((\mu-\omega) \bar{\zeta_1}-\lambda\big)\Big)\K^{\text{int}}_{+-}  
\\
& \quad +\Big(- i(\mu -\omega) \zeta_2 \sinh \sigma   
-\gamma  \zeta_2 \cosh \sigma   
- \gamma   \zeta_1\Big) \D^{\text{int}}_{-+}
\\
& \quad +\Big( i(\mu -\omega)\bar{\zeta_2} \sinh \sigma  
- \gamma\bar{\zeta_2} \cosh \sigma  
 - \gamma\bar{\zeta_1}\Big) \D^{\text{int}}_{+-}  
\\
 & \quad 
    + \Big( \lambda \gamma\delta  
 + \gamma|\zeta_2|^2 
 -\gamma |\zeta_1|^2\Big)\D_{\text{sp}} 
\Bigr] e^{\sigma\D_{\text{ph}}} 
\\
& =  \Bigl[  
\Big(- i(\mu -\omega) (\zeta_1+\zeta_2  \cosh \sigma) 
-i \gamma^2\delta 
+ i\lambda \Big)\K^{\text{int}}_{-+} 
\\
& \quad +\Big(- i(\mu -\omega)( \bar{\zeta_1} +  \bar{\zeta_2} \cosh \sigma)  
-i\gamma^2\delta + i\lambda\Big)\K^{\text{int}}_{+-}  
\\
& \quad +\Big((\mu -\omega)\gamma\delta   
-i \gamma^2\delta  \coth \sigma   
- \gamma   \zeta_1\Big) \D^{\text{int}}_{-+}   
\\
& \quad +\Big( (\mu -\omega)\gamma\delta  
+i \gamma^2\delta \coth \sigma  
 - \gamma\bar{\zeta_1}\Big) \D^{\text{int}}_{+-}  
\\
 & \quad 
    + \Big( \lambda \gamma\delta  
 + \gamma|\zeta_2|^2 
 -\gamma |\zeta_1|^2\Big)\D_{\text{sp}} 
\Bigr] e^{\sigma\D_{\text{ph}}} \qquad = 0 \,.
\end{align*}
We have used (\ref{V-constans-function}) in the last two steps.   

%%%%%%%%%%%%%%%%%%%%%%%%%%%% References %%%%%%%%%%%%%%%%%%%%%%%%%%%%

\end{document}